\DeclareMathOperator{\supp}{supp}
\newtheorem{theorem}{Theorem}
\newtheorem{proposition}{Proposition}
\newtheorem{corollary}{Corollary}
\newdefinition{definition}{Definition}
\newdefinition{example}{Example}
\newdefinition{remark}{Remark}
\journal{Mathematical Biosciences}
\begin{document}

\begin{frontmatter}

\title{Species subsets and embedded networks of S-systems}

\author[mmsu,dlsu]{Honeylou F. Farinas}
\ead{farinas\_ honey@yahoo.com}
\author[dlsu,imsp,max,lmu]{Eduardo R. Mendoza}
\ead{mendoza@lmu.de}
\author[dlsu]{Angelyn R. Lao \corref{cor1}}
\ead{angelyn.lao@dlsu.edu.ph}

\cortext[cor1]{Corresponding author. Phone number: (632) 536 0270. Present address: Mathematics and Statistics Department, De La Salle University, Manila, 0922 Philippines}

\address[mmsu]{Department of Mathematics, Mariano Marcos State University, Ilocos Norte, 2906 Philippines}
\address[dlsu]{ Mathematics and Statistics Department, De La Salle University, Manila, 0922 Philippines}
\address [imsp]{Institute of Mathematical Sciences and Physics, University of the Philippines Los Ba\~{n}os, Laguna, 4031 Philippines}
\address[max]{Max Planck Institute of Biochemistry, Martinsried, Munich, Germany}
\address[lmu] {Faculty of Physics, Ludwig Maximilians University, Geschwister -Scholl- Platz 1, 80539 Munich Germany}

\begin{abstract}
Magombedze and Mulder (2013) studied the gene regulatory system of  \textit{Mycobacterium Tuberculosis} (\textit{Mtb}) by partitioning this into three subsystems based on putative gene function and role in dormancy/latency development. Each subsystem, in the form of $S$-system, is represented by an embedded chemical reaction network (CRN), defined by a species subset and a reaction subset induced by the set of digraph vertices of the subsystem. Based on the network decomposition theory initiated by Feinberg in 1987, we have introduced the concept of incidence-independent and developed the theory of $\mathscr{C}$- and $\mathscr{C}^*$-decompositions including their structure theorems in terms of linkage classes. With the $S$-system CRN $\mathscr{N}$ of Magombedze and Mulder's \textit{Mtb} model, its reaction set partition induced decomposition of subnetworks that are not CRNs of $S$-system but constitute independent decomposition of $\mathscr{N}$. We have also constructed a new $S$-system CRN $\mathscr{N}^*$ for which the embedded networks are $\mathscr{C}^*$-decomposition. We have shown that subnetworks of $\mathscr{N}$ and the embedded networks (subnetworks of $\mathscr{N}^*$) are digraph homomorphisms. Lastly, we attempted to explore modularity in the context of CRN.
\end{abstract}

\begin{keyword}
Chemical reaction networks \sep $S$-system \sep Embedded network \sep $\mathscr{C}^*$-decomposition \sep $\mathscr{C}$-decomposition \sep Incidence-independent decomposition \sep Modularity

\end{keyword}

\end{frontmatter}


\section{Introduction}
\label{intro}
$S$-system models are typically derived from biochemical maps \cite{VOIT2000}, which are based on digraphs whose vertices and arcs represent molecules and interactions of a biochemical system. Modularity of the system is often expressed by partitioning the vertices into subsets and forming corresponding subsystems. In the embedded chemical reaction network (CRN) representation of the S-system \cite{AJLM2017}, each subsystem is represented by an embedded network, defined by a species subset and a reactions subset induced by the set of digraph vertices of the subsystem. In this paper, we study the relationship between the $S$-system CRN and its embedded networks to gain insight into some properties of the modelled biochemical system.

The paper was motivated by our analysis of two $S$-system models of the gene regulatory system of \textit{Mycobacterium Tuberculosis} (\textit{Mtb}) by Magombedze and Mulder \cite{MAMU2013}. In those models, the vertices (representing genes) were subdivided into three subsets based on putative gene function. We use the $S$-systems and their subsystems as running examples in the paper.

Our approach is based on the decomposition theory of chemical reaction networks, which was initiated by M. Feinberg in his 1987 review \cite{FEIN1987}. There, he also introduced the important concept of an independent decomposition: a
 decomposition \{$\mathscr{N}_1,...,\mathscr{N}_k$\} of $\mathscr{N}$ is independent if the stoichiometric subspace of $\mathscr{N}$ is the direct sum of the stoichiometric subspaces of the subnetworks $\mathscr{N}_i$. In \cite{FLRM2017}, it was shown that the deficiency of the network is bounded by the sum of the deficiencies of the subnetworks of an independent decomposition, i.e. $\delta \leq \delta_1 + ...+ \delta_k$.
 
 We introduce the complementary concept of incidence-independence, i.e. when the image of the network's incidence map is the direct sum of the images of the subnetworks' incidence maps. The primary example of an incidence-independent decomposition are the linkage classes of a network. For incidence-independent decompositions, $\delta \geq \delta_1 + ...+ \delta_k$ , a property familiar from linkage classes. A bi-independent decomposition is one which is both independent and incidence-independent, and for which $\delta = \delta_1 + ...+ \delta_k$.
 
 We observed that the embedded networks of the \textit{Mtb} $S$-systems had the distinctive property that the zero complex was the only complex common to them. We abstracted the concept of a $\mathscr{C^*}$-decompositions from this: in such a decomposition, the non-zero complexes partition the set of the  non-zero complexes of the network. An interesting subset of $\mathscr{C^*}$-decompositions are the $\mathscr{C}$-decompositions, where the partition of the reaction set (defining the decomposition) also partitions the set of complexes. Again, the primary example of a $\mathscr{C}$-decomposition are the linkage classes. We derive structure theorems for $\mathscr{C}$-decompositions and $\mathscr{C^*}$-decompositions in terms of linkage classes.

 One of our main results is a framework consisting of three components:
 \begin{itemize}
 \item $\mathscr{N}$ and an independent decomposition into $S$-system similar subnetworks
 \item a new $S$-system CRN $\mathscr{N}^*$ for which the embedded networks are a $\mathscr{C}^*$-decomposition
 \item digraph homomorphisms from subnetworks of $\mathscr{N}$ to the corresponding embedded networks ($=$ subnetworks $\mathscr{N}^*$)
 \end{itemize} 
 
  We have organized the contents of the paper as follows:
 
 Section 2 collects the concepts and known results from the theory of chemical reaction networks underlying our results.
 
 Section 3 introduces the NRP (Non-Replicating Phase) and STR (Stationary Phase) models of \textit{Mtb's} gene regulatory system as well as the subsystems of genes $\mathcal{D}_1,\mathcal{D}_2$ and $\mathcal{D}_3$.
 
 Section 4 presents a correction of Proposition 10 of \cite{AJMM2015}, which was motivated by the calculation of the deficiency of the NRP CRN, whose value was 39, instead of the predicted 40. The correct formula is an inequality and a new proof using decomposition theory is provided. Incidence-independence of the species decomposition of the $S$-system CRN is a sufficient condition for equality, as in the original proposition.
 
 Section 5 collects and analyzes the properties of the NRP and STR networks as well as those of their embedded networks. A comparison with those from the BST Case studies of \cite{AJLM2017} show a very high degree of coincidence. Concordance tests of current tools cannot handle the large networks we have been studying. However for $S$-system CRNs, we have found a proof of discordance for all networks with at least two species.
 
 Section 6 develops the theory of $\mathscr{C}$ - and $\mathscr{C}^*$ -decompositions including their structure theorems in terms of linkage classes. We then construct $\mathscr{N}^*$, the $*$-disjoint union of embedded networks and show that the embedded networks constitute an independent $\mathscr{C}^*$ -decomposition of $\mathscr{N}^*$. 
 
 In the final results section (Section 7), we complete the previously mentioned framework by introducing the $S$-system similar subnetworks $\mathscr{N}_i$ that constitute an independent decomposition of $\mathscr{N}$ as well as the digraph homomorphisms $\mathscr{N}_i \rightarrow \mathscr{N}_i^*$. We then show that the modularity of the decomposition of the $S$-system $\mathscr{N}$ (considered as a digraph division) is not consistent with the modularity of the original digraph model of the gene regulatory system. We introduce the concept of species coupling level to suggest that for a CRN, stoichiometric properties are needed for an appropriate modularity concept.
 

\section{Fundamentals of chemical reaction networks}
\label{sec:2}
In this section, we review concepts of subnetworks and embedded networks, decomposition and CRN representation of an $S$-system. For a background on Chemical Reaction Network Theory (CRNT), the reader is referred to \ref{app:fundametals of CRNT}. The discussion
includes the notations, concepts and results pertinent to this paper.

\subsection{Subnetworks and embedded networks of a CRN}
A subnetwork and an embedded network of a CRN were introduced by Joshi and Shui in \cite{JOSH2013} and \cite{JOSH2012}, respectively. The definitions are based on the concept of restriction maps between subsets in a network's sets.

\begin{definition}
\label{def:restriction}
Let $(\mathscr{S}, \mathscr{C}, \mathscr{R})$ be a chemical reaction network.
Consider a subset of the species $S \subset \mathscr{S}$, a subset of the complexes $C \subset \mathscr{C}$ and a subset of the reactions $R \subset \mathscr{R}$.
\begin{enumerate}
\item The \textbf{restriction of $R$ to $S$}, denoted by $R |_S$, is the set of reactions obtained by taking the reactions in $R$ and removing all species not in $S$ from the reactant and product complexes. If a reactant or a product complex does not contain any species from the set $S$, then the complex is replaced by the $0$ complex in $R |_S$. If a trivial reaction (one in which the reactant and product complexes are the same) is obtained in this process, then that reaction is removed. Also, extra copies of repeated reactions are removed.
\item The \textbf{restriction of $C$ to $R$}, denoted by $C |_R$, is the set of (reactant and product) complexes of the reactions in $R$.
\item The \textbf{restriction of $S$ to $C$}, denoted by $S |_C$, is the set of species that are in the complexes in $C$.
\end{enumerate}
\end{definition}

\begin{definition}
Let $(\mathscr{S}, \mathscr{C}, \mathscr{R})$ be a chemical reaction network. 
\begin{enumerate}
\item A subset of the reactions $\mathscr{R}' \subset \mathscr{R}$ defines the \textbf{subnetwork} $(\mathscr{S} |_{\mathscr{C} |\mathscr{R}'}, \mathscr{C} | _{\mathscr{R}'}, \mathscr{R}')$, where $\mathscr{C}|_{\mathscr{R}'}$ denotes the set of complexes that appear in the reactions $\mathscr{R}'$ and $\mathscr{S} |_{\mathscr{C} | \mathscr{R}'}$ denotes the set of species that appear in those complexes.
\item An \textbf{embedded network} of $(\mathscr{S}, \mathscr{C}, \mathscr{R})$, which is defined by a subset of the species, $S=\{X_{i1}, X_{i2}, \cdots, X_{ik} \}\subset \mathscr{S}$, and a subset of the reactions, $R=\{R_{j1}, R_{j2}, \cdots, R_{jl} \}\subset \mathscr{R}$, that involve all species of $S$, is the network $\left( S, \mathscr{C}|_{R|_{S}}, R|_S \right)$ consisting of the reactions $R|_{S}$.
\end{enumerate}
\end{definition}

In the context of this paper, the subnetworks and embedded networks of the CRN representation of an $S$-system are not necessarily the same. We used $\mathscr{N}_i$ and $\mathscr{N}_i^*$ to denote the subnetwork and embedded network of a CRN $\mathscr{N}$, respectively. The set of subnetworks $\mathscr{N}_i$ is obtained by removing a subset of reactions, while an embedded network $\mathscr{N}_i^*$ is obtained by removing a subset of reactions or subsets of species or both. For instance, removing the species $X_2$ from the reaction $\displaystyle{X_1 + X_2 \rightarrow X_1 + X_3}$ results in the reaction
$X_1 \rightarrow X_1 + X_3$.

In some cases, removing species results in a trivial reaction – where the source and product complex are identical. For instance, removal of both $X_2$ and $X_3$ from $X_1 + X_2 \rightarrow X_1 + X_3$ results in the trivial reaction $X_1 \rightarrow X_1$. So, after removing species, any trivial reactions and any copies of duplicate reactions are discarded.

\begin{example}
Consider the $S$-system ARL3-S of Arceo et. al in \cite{AJLM2017} with 4 species $\mathscr{S}=\{X1, X2, X3, X4\}$ and 8 reactions $\mathscr{R}=\{R1, R2,...,R8\}$ with the following CRN $\mathscr{N}$:
\begin{align*}
R1:&X1 \rightarrow 2X1   		& R5:&X2+X4 \rightarrow X3+X2+X4\\              
R2:&X1+ X2+X4 \rightarrow X2+X4 & R6:&X3 \rightarrow ) \\
R3:&X1+X2 \rightarrow X1+2X2	& R7:&X2+X3 \rightarrow X3+X2+X4\\
R4:&X1+X2 \rightarrow X1		& R8:&X4 \rightarrow 0
\end{align*}
Suppose that we will form 2 subnetworks $(\mathscr{N}_1$ and $\mathscr{N}_2)$ and 2 embedded networks  $(\mathscr{N}_1^*$ and $\mathscr{N}_2^*)$ of $\mathscr{N}$. To form the subnetworks $\mathscr{N}_1$ and $\mathscr{N}_2$ , we simply partition the eight reactions. Suppose we partition the reactions as follows:  $\mathscr{R}_{\mathscr{N}_1}=\{R1, R2, R3, R4\}$ and $\mathscr{R}_{\mathscr{N}_2}=\{R5, R6, R7, R8\}$. Thus, $\mathscr{N}_1$ and $\mathscr{N}_2$ consist of the set of reactions $\mathscr{R}_{\mathscr{N}_1}$ and $\mathscr{R}_{\mathscr{N}_2}$, respectively. Now, for the embedded networks $\mathscr{N}_1^*$ and $\mathscr{N}_2^*$, we partition the four species. Suppose we have the following partitions: $\mathscr{S}_{\mathscr{N}_1^*}=\{X1, X2\}$ and $\mathscr{S}_{\mathscr{N}_2^*}=\{X3, X4\}$. Applying (1) of Definition \ref{def:restriction}, we have the following set of reactions: $\mathscr{R}_{\mathscr{N}_1^*}= \{ X1 \rightarrow 2X1, X1 + X2 \rightarrow X2, X1+X2 \rightarrow X1+2X2, X1 + X2 \rightarrow X1\}$ and $\mathscr{R}_{\mathscr{N}_2^*}
= \{ 0 \rightarrow X3, X3 \rightarrow 0, X3 \rightarrow X3 +X4, X4 \rightarrow 0 \} $
\end{example}

\subsection{Decomposition of a CRN}
\label{decomposition}
Decomposition Theory was initiated by Feinberg in his 1987 review paper. In this paper, we introduced several decompositions as a result of our CRN analysis of Magombedze and Mulder's \cite{MAMU2013} \textit{Mtb} $S$-systems. The incidence-independent and species decompositions will be discussed in Section \ref{sec:2} and the $\mathscr{C}^*$ and $\mathscr{C}$-decompositions in Section \ref{sec:6.1}. Their relationship with that of Feinberg's decomposition are shown in \ref{app:network decomposition}.
 
Feinberg's general concept of a network decomposition is as follows:
\begin{definition}
\label{def:subnetwork}
A set of subnetworks $\mathscr{N}_i=(\mathscr{S}_i, \mathscr{C}_i, \mathscr{R}_i)$ is a \textbf{network decomposition} of $\mathscr{N}=(\mathscr{S},\mathscr{C}, \mathscr{R})$ if $\{\mathscr{R}_i\}$ forms a partition of $\mathscr{R}$.
\end{definition}

 
 Note that partitioning the reaction set does not necessarily partition the set of complexes. A basic property of decomposition is that ${s} \leq \sum s_i$ where $s$ and $s_i$ are the dimensions of the stoichiometric subspaces of the network $\mathscr{N}$ and its corresponding subnetworks $\mathscr{N}_i$, respectively. Meanwhile, if they are equal, then the network is  \textbf{independent} \cite{FEIN1987}.
 
 The best known and most studied decomposition is the linkage class decomposition. Independence of linkage classes (ILC) is an important feature of a CRN. Boros \cite{BOROS2013} showed that a necessary and sufficient condition for a  linkage class to be independent is $\delta=\sum \delta_i$.  Generally, if a network has an independent decomposition then $\delta \leq \sum \delta_i$ \cite{FLRM2017}.

When considering partitions of the reaction set of a CRN, we also call the decompositions a refinement or coarsening of the decomposition. 
 
\begin{definition}
 If $\mathcal{P}=\{\mathcal{P}_i\}$ and $\mathcal{P'}=\{\mathcal{P'}_j\}$ are partitions of a set, then $\mathcal{P}$ is a \textbf{refinement} of $\mathcal{P'}$ if each $\mathcal{P}_i$ is contained in (exactly) one $\mathcal{P'}_j$. 
\end{definition}
 
It is easy to show that this property is equivalent to each $\mathcal{P'}$ being the disjoint union of some $\mathcal{P}_i$'s. We also say the $\mathcal{P}$ is \textbf{finer} than $\mathcal{P'}$, $\mathcal{P'}$ is \textbf{coarser} than $\mathcal{P}$ and $\mathcal{P'}$ is a \textbf{coarsening} of $\mathcal{P}$.
 
We have the following elementary but useful Proposition:
 
\begin{proposition}
\label{prop: independent coarsening}
If a decomposition is independent, then any coarsening of the decomposition is independent.
\end{proposition}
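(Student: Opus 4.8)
The plan is to reduce everything to a dimension count using the two facts already recorded in the text: that for \emph{any} decomposition the stoichiometric subspace dimension satisfies $s \le \sum s_i$, and that independence is precisely the equality $s = \sum s_i$. First I would fix notation. Let the independent decomposition be $\{\mathscr{N}_1,\dots,\mathscr{N}_k\}$, arising from a partition $\{\mathscr{R}_1,\dots,\mathscr{R}_k\}$ of $\mathscr{R}$, with stoichiometric subspaces $S_i$ of dimension $s_i$; by independence, $s = \sum_{i=1}^k s_i$. A coarsening corresponds to a partition $\{I_1,\dots,I_m\}$ of the index set $\{1,\dots,k\}$, giving blocks $\mathscr{R}'_j = \bigcup_{i\in I_j}\mathscr{R}_i$ and subnetworks $\mathscr{N}'_j$ with stoichiometric subspaces $T_j$ of dimension $t_j$.

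The key structural observation is that the stoichiometric subspace of a union of reaction blocks is the sum of the constituent stoichiometric subspaces, since each such subspace is spanned by the reaction vectors of its block. Hence $T_j = \sum_{i\in I_j} S_i$, and by subadditivity of dimension under sums of subspaces, $t_j \le \sum_{i\in I_j} s_i$. Summing over $j$ and using that the $I_j$ partition $\{1,\dots,k\}$ gives $\sum_j t_j \le \sum_{i=1}^k s_i = s$.

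On the other hand, $\{\mathscr{N}'_j\}$ is itself a decomposition of $\mathscr{N}$, so the basic property applied to it yields $s \le \sum_j t_j$. Chaining the two inequalities produces the squeeze
\[
 s \;\le\; \sum_{j} t_j \;\le\; \sum_{j}\sum_{i\in I_j} s_i \;=\; \sum_{i=1}^{k} s_i \;=\; s,
\]
forcing equality throughout; in particular $s = \sum_j t_j$, which is exactly the statement that the coarsening is independent.

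I do not expect a genuine obstacle here, as the argument is purely a dimension squeeze. The one point requiring a word of care is the identity $T_j = \sum_{i\in I_j} S_i$ — i.e.\ that merging reaction blocks contributes nothing beyond the span of the individual blocks' reaction vectors — which is immediate from the definition of the stoichiometric subspace as the span of reaction vectors but should be stated explicitly. Equivalently, one could argue directly that a direct-sum decomposition $S = \bigoplus_i S_i$ regrouped along any partition of the summands remains a direct sum; I would nonetheless prefer the dimension formulation above, since it invokes only the already-stated inequality $s \le \sum s_i$ and the definition of independence.
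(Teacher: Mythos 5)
Your proof is correct, and it takes a different route from the paper's. The paper argues directly at the level of subspaces: it takes an element $x$ in the intersection of the stoichiometric subspaces of two coarsened blocks, writes it via the direct-sum property of the refinement, and concludes $x=0$. You instead run a purely numerical squeeze, $s \le \sum_j t_j \le \sum_j \sum_{i\in I_j} s_i = \sum_i s_i = s$, using only the basic inequality $s \le \sum s_i$, the characterization of independence as equality of dimensions, and the (correctly flagged) identity $T_j = \sum_{i\in I_j} S_i$, which holds because each stoichiometric subspace is the span of its block's reaction vectors and the span of a union is the sum of the spans. Your version is arguably the cleaner of the two: the paper's argument as written only verifies that the coarsened subspaces have \emph{pairwise} trivial intersection, which for more than two summands is weaker than the direct-sum condition defining independence (one needs each $T_j$ to meet the sum of the others trivially), whereas the dimension count certifies directness of the sum in one stroke. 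What the paper's approach buys in exchange is that it exhibits the structural reason the result holds --- regrouping the summands of a direct sum yields a direct sum --- which is the observation you mention as an alternative at the end; either formulation is acceptable, and yours is complete as stated.
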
 
 
\begin{proof}
 Suppose $x$ is in the intersection of the stoichiometric subspaces of two subnetworks of a coarsening. Since each stoichiometric subspace is the direct sum of subspaces from the independent refinement, then the $x$ is the sum of elements from each subnetwork. It follows that $x=0$.
\end{proof}

 \subsection{CRN representation of an S-system}
 
 An $S$-system is a set of first-order differential equations that all have the same structure: the derivative of a variable is equal to the difference of two products of power-law functions. The formalism is based on using linear Taylor approximation in logarithmic space. An $S$-system with $m$ dependent variables $X_i$ is defined as
$$\dfrac{dX_i}{dt}=\alpha_i\displaystyle\prod_{j=1}^{m}X_j^{g_{ij}}-
\beta_i\displaystyle\prod_{j=1}^{m}X_j^{h_{ij}},\; i\in\{1,\dots,m\}$$
where the dependent variables and the constants $\alpha_i$ and $\beta_i$ are nonnegative, and the exponents $g_{ij}$ and $h_{ij}$ are real. 

To enable the use of tools from CRNT in enhancing understanding of $S$-system dynamics, Arceo et. al  introduced in \cite{AJMM2015} the CRN representations of an $S$-system using a biochemical map. A biochemical map for such an $S$-system is the following:

\begin{figure}[h!]
\begin{center}
    \includegraphics[width=0.5\textwidth]{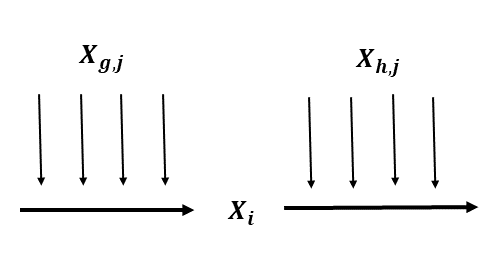}
\end{center}
\caption{Biochemical map of an $S$-system.}
\label{fig:S system}
\end{figure}

\noindent
For each dependent variable $X_i$, where $X_{g,j}$ is a variable with $g_{ij} \neq 0$ in the production term and $X_{h,j}$ is one with $h_{ij} \neq 0$, $i \neq j$ in the degradation term. The CRN representation of this biochemical map is the CRN representation of the $S$-system. In other words, for each $X_i$, the input reaction is $\sum X_{g,j} \rightarrow X_i + \sum X_{g,j}$ and the output reaction is $X_i + \sum X_{h,j} \rightarrow \sum X_{h,j}$. For each independent variable $X_k$, we add a reaction $0 \rightarrow X_k$. (see Figure \ref{fig:S system})


\section{The running examples: S-system models of gene regulatory systems of \textit{Mycobacterium tuberculosis}}
\label{sec:3}

Magombedze and Mulder \cite{MAMU2013} developed two models of gene regulatory systems of \textit{Mycobacterium tuberculosis} that used principles of biochemical pathway modelling of systems biology into S-systems. The models correspond to two phases of the pathogen's life cycle: the Non-Replicating Phase (NRP) and  the Stationary Phase (STR). The NRP model simulates the development of dormant \textit{Mtb} by gradual oxygen depletion over 80 days. While the STR model simulates latent \textit{Mtb} over 60 days. The gene regulatory system of \textit{Mtb} was formalized as digraphs (see Figure ~\ref{fig:NRPSTR}), with the vertices representing the genes and the arcs representing the interactions between them. 

An S-system - with the digraph as its \enquote{biochemical map}- is then built for both NRP (left side of Figure \ref{fig:NRPSTR}) and STR (right side of Figure \ref{fig:NRPSTR}) models with the genes (vertices of the digraph) as variables. The models assumed that gene interactions will result in the change of expression of one gene, in such a way that all interactions should contribute to the term which is modelled as a function responsible for gene regulation in the system. Thus, the $S$-systems for NRP and STR considered the genes as the dependent variables $X_i$ and no independent variables. It is represented as
$$\frac{dX_i}{dt}=\alpha_i \prod_{j=1}^{m} X^{g_{ij}}_j - \beta_i X_i, \quad i\in {1,...,m} $$
where the constants $\alpha_i$ and $\beta_i$ are nonnegative and the exponent $g_{i,j}$ is real. The S-system ODEs for the NRP and STR models can be found in the Supplementary Information of \cite{MAMU2013}.

\begin{figure}
\begin{center}
\begin{minipage}[b]{0.45\textwidth}
    \includegraphics[width=1\textwidth]{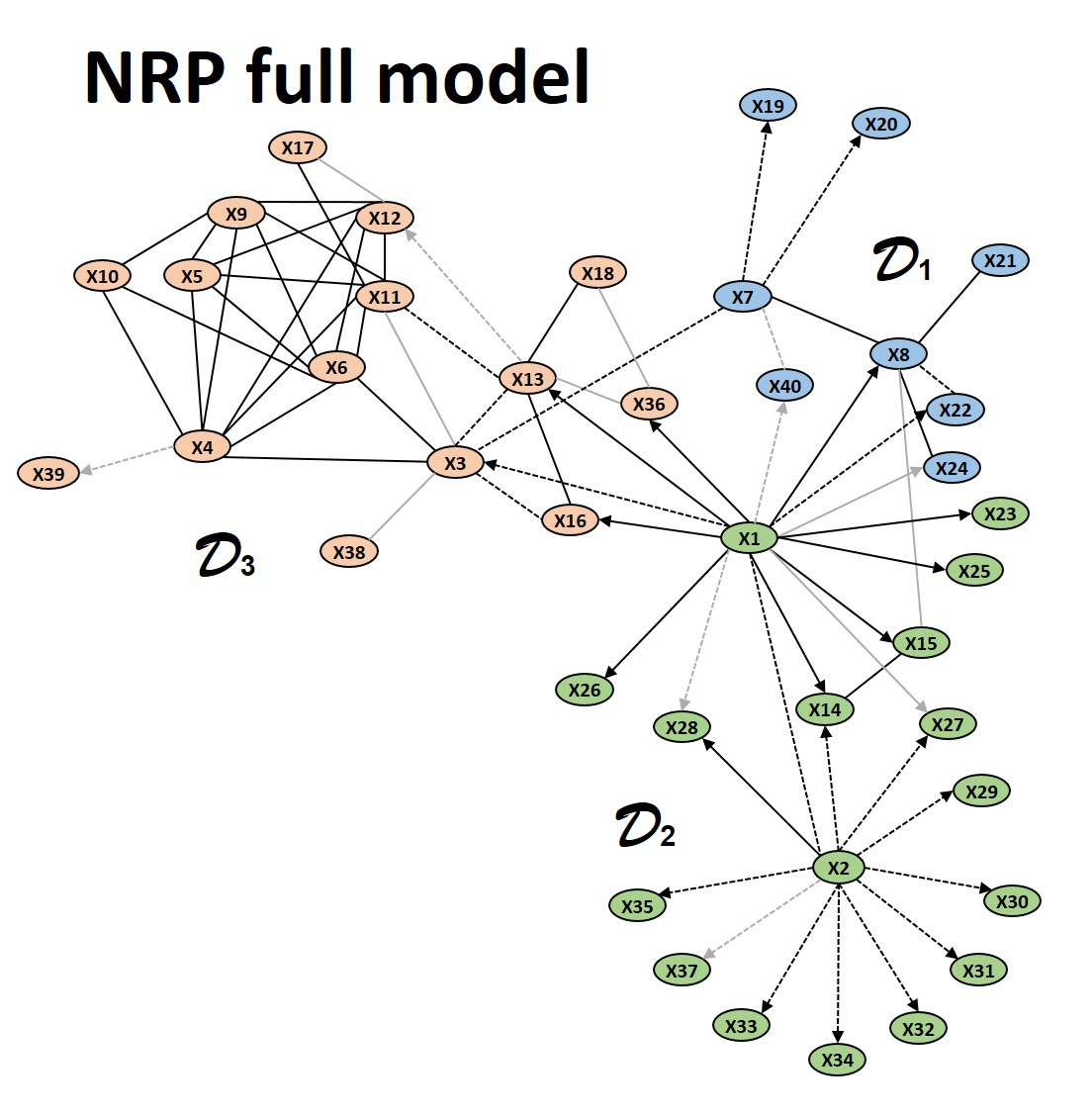}
  \end{minipage}
  \hspace{0.5cm}
  \begin{minipage}[b]{0.45\textwidth}
    \includegraphics[width=1\textwidth]{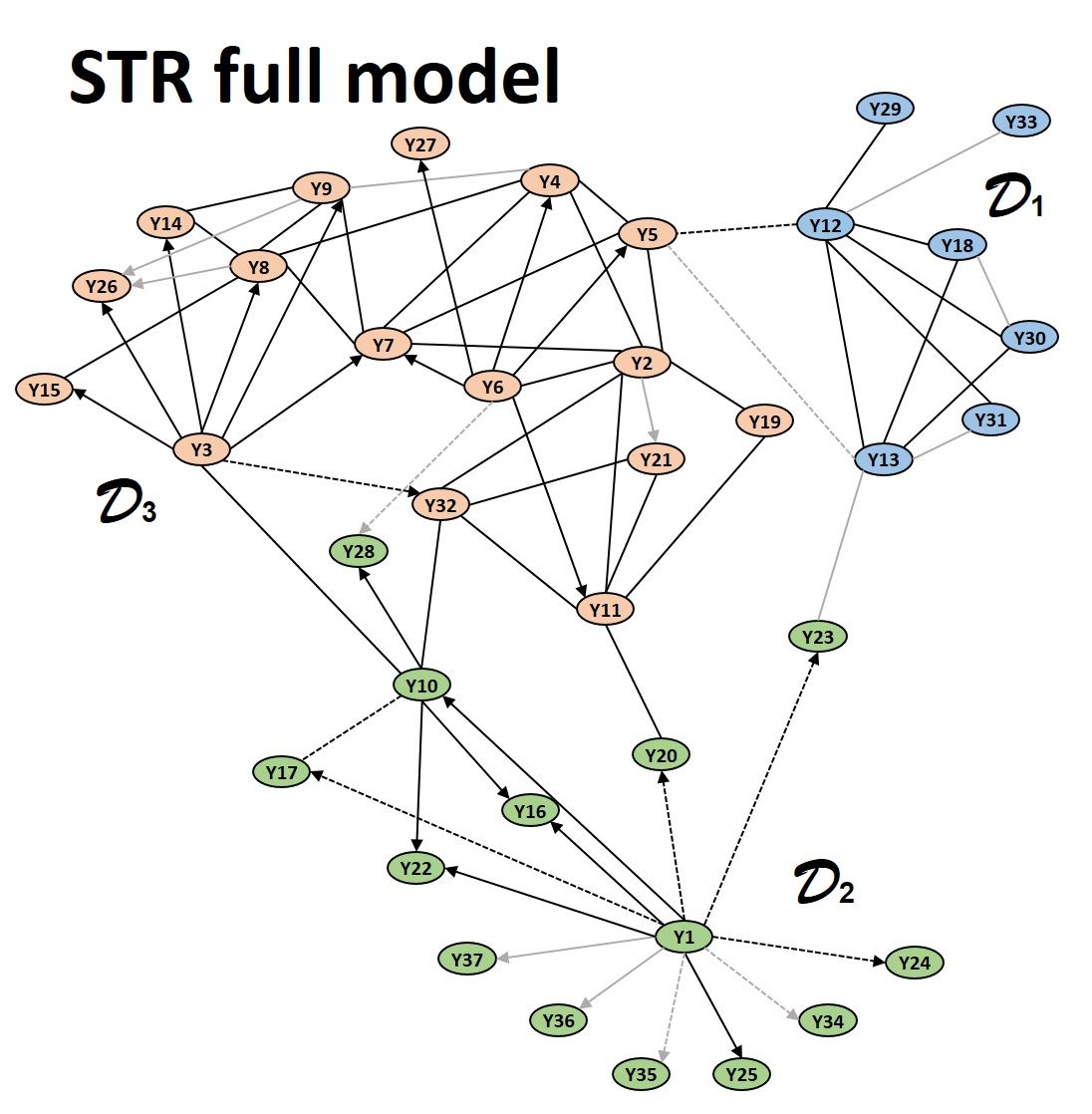}
  \end{minipage}
\end{center}
\caption{Digraphs of the NRP and STR full model. The vertices represent the genes and the arcs represent the interactions between them. Arcs with no direction indicate bi-directional interaction.}
\label{fig:NRPSTR}
\end{figure}

A vertex set partition, based on putative gene function and role in dormancy/latency development, defines three subdigraphs in both models: $\mathcal{D}_1,\mathcal{D}_2$ and $\mathcal{D}_3$. Subdigraph $\mathcal{D}_1$, is constituted by genes that are involved in the bacilli's cell wall process. Genes that have functions related to the DosR-regulon and adaptation are clustered in $\mathcal{D}_3$ and several genes that have other functions constitute $\mathcal{D}_2$. An arc is deleted between a vertex belonging to different subdigraphs.  These subdigraphs correspond to the subsystems of the $S$-system for the NRP and STR models.  

In this paper, we used CRN to represent the corresponding NRP and STR $S$-systems of \cite{MAMU2013} and study interesting network properties it exhibit. The CRN representation of each $S$-system is referred to as the \enquote{full network} $\mathscr{N}$ with the genes (variables) as species and each subsystem is referred to as the three embedded networks $\mathscr{N}_1^*,\mathscr{N}_2^*$ and $\mathscr{N}_3^*$. These embedded networks resulted from the vertex (species) set partition of \cite{MAMU2013}.

The CRNs for the NRP and STR models as well as the three embedded networks in each model are documented in the Supplementary Information. Table \ref{tab:NRP and STR network numbers N and N*i } summarizes the network numbers for the CRN representations of the S-system of the NRP and STR full models $\mathscr{N}$ and their corresponding embedded networks $\mathscr{N}_i^*$.

\begin{table}[h!]
\footnotesize
\begin{center}
\caption{NRP and STR Network numbers for $\mathscr{N}$ and its embedded networks $\mathscr{N}_i^*$}
\label{tab:NRP and STR network numbers N and N*i }       
\begin{tabular}{ccccc|cccc}
\hline\noalign{\smallskip}
\multirow{2}{*}{Network numbers} &
      \multicolumn{4}{c}{NRP} & 
      \multicolumn{4}{c}{STR} \\
    & $\mathscr{N}$ & $\mathscr{N}_1^*$ & $\mathscr{N}_2^*$ & $\mathscr{N}_3^*$ &	 $\mathscr{N}$ & $\mathscr{N}_1^*$ & $\mathscr{N}_2^*$ & $\mathscr{N}_3^*$ \\     
\noalign{\smallskip}\hline\noalign{\smallskip}
$m$ 			& 40 & 8  & 17 & 15	& 37 & 7  & 14 & 16 \\
$m_{rev}$		& 0  & 0  &  0 &  0 & 1  & 0  & 1  & 1 \\
$n$				& 98 & 19 & 36 & 42 & 95 & 19 & 29 & 44\\
$n_r$			& 60 & 10 & 20 & 28 & 60 & 12 & 17 & 29\\
$r$				& 80 & 16 & 34 & 30 & 74 & 14 & 28 & 32\\
$r_{irrev}$		& 80 & 16 & 34 & 30	& 72 & 14 & 26 & 30\\
$l$				& 19 & 3  &  3 & 12 & 22 & 5  &  2 & 13 \\
$sl$			& 98 & 19 & 36 & 42 & 94 & 19 & 28 & 43\\
$t$				& 38 & 9  & 16 & 14 & 35 & 7  & 12 & 15\\
$s$				& 40 & 8  & 17 & 15 & 37 & 7  & 14 & 16\\
$q$				& 40 & 8  & 17 & 15 & 37 & 7  & 14 & 16\\
$\delta$		& 39 & 8  & 16 & 15 & 36 & 7  & 13 & 15\\
$\delta_p$		& 20 & 2  & 3  & 13 & 23 & 5  & 3  & 13\\
\noalign{\smallskip}\hline
\end{tabular}
\end{center}
\end{table}

\section{The deficiency formula for S-system CRNs}
\label{sec:4}
In \cite{AJMM2015}, Arceo et al. introduced the concepts of reversibility and non-self regulation of dependent species (variables) of an $S$-system and derived a formula for calculating the total deficiency of a non-self regulating (NSR) S-systems (Proposition 10 of \cite{AJMM2015}). It turns out that the equality in the formula does not hold in general. In this Section, we present two counterexamples: the NRP full network $\mathscr{N}$ and its embedded network $\mathscr{N}_2^*$. We restate the result and provide a proof based on decomposition theory. 

First, we recall definitions from \cite{AJMM2015}.

\begin{definition}
An $S$-system is called \textbf{non-self regulating} (NSR) if it is monomolecular and each dependent species $X_i$ does not occur in its production term.
\end{definition}

\begin{remark}
The term \enquote{monomolecular} refers to the biochemical map or digraph underlying the S-system. 
\end{remark}

\begin{definition}
Let $R_j$ and $P_j$ be the set of variables regulating the inflow and outflow reactions of the species $X_j$ (= dependent variable) of an $S$-system, respectively. The species is called \textbf{reversible} if $R_j=P_j$. Otherwise, it is called irreversible. An $S$ -system is called reversible (irreversible) if all its species are reversible (irreversible).
\end{definition}

The term reversible species comes from the fact that the reaction pair associated with $X_j$ is $R_j\leftrightarrow Xj + Rj$. More generally, we write $R_j\rightarrow X_j + R_j$ and $X_j + P_j \rightarrow P_j$ for the
reaction pair associated with a species.

\begin{example}
Both the NRP and STR full models $\mathscr{N}$ and their embedded networks $\mathscr{N}^*_i$ are non-self regulating S-sytem. But only the NRP $\mathscr{N}$ and its embedded networks $\mathscr{N}^*_i$ are irreversible. The STR $\mathscr{N}$ and some of its embedded networks $\mathscr{N}^*_i$ have a reversible species.
\end{example}

Next, we state the formula in computing for the total deficiency of an NSR $S$-system. 

\begin{proposition}
\label{prop:prop 10 in AJMM2015}
(Proposition 10 in \cite{AJMM2015}). Let $m$ be the number of dependent variables (species) in an NSR $S$-system. If $m_{rev}$ is the number of reversible species then the total deficiency is $m-m_{rev}$.
\end{proposition}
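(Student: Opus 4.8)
The plan is to prove the equality by working with the \textbf{species decomposition} of the $S$-system CRN $\mathscr{N}$, whose $i$-th subnetwork $\mathscr{N}_i$ consists of exactly the reaction pair associated with the dependent species $X_i$, and to show that this decomposition is bi-independent, so that $\delta = \sum_i \delta_i$ with the right-hand side equal to $m - m_{rev}$.

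First I would compute $\delta_i$ for each subnetwork. Writing the pair for $X_i$ as $R_i \rightarrow X_i + R_i$ and $X_i + P_i \rightarrow P_i$, the NSR hypothesis forces $X_i \notin R_i$ and the definition of the degradation regulators forces $X_i \notin P_i$. For a reversible species ($R_i = P_i$) the two reactions form a single reversible pair between the complexes $R_i$ and $X_i + R_i$, so $n_i = 2$ and $l_i = 1$; for an irreversible species ($R_i \neq P_i$) the four complexes $R_i,\, X_i + R_i,\, X_i + P_i,\, P_i$ are pairwise distinct, since each distinctness check follows from $R_i \neq P_i$ together with $X_i \notin R_i \cup P_i$, so $n_i = 4$ and $l_i = 2$. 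In either case the reaction vectors are $\pm e_i$ (the $i$-th standard basis vector), so the stoichiometric subspace of $\mathscr{N}_i$ is $\spn\{e_i\}$ and $s_i = 1$. Hence $\delta_i = n_i - l_i - s_i$ equals $0$ for a reversible and $1$ for an irreversible species, and summing over all $m$ species gives $\sum_i \delta_i = m - m_{rev}$.

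Next I would establish bi-independence. \textbf{Independence} is immediate: the subspaces $\spn\{e_i\}$ are spanned by distinct standard basis vectors, hence linearly independent, so the stoichiometric subspace of $\mathscr{N}$ is their direct sum and $s = \sum_i s_i = m$; by the bound for independent decompositions this already yields $\delta \leq \sum_i \delta_i = m - m_{rev}$. For the reverse inequality I would prove \textbf{incidence-independence}, namely that the image of the incidence map of $\mathscr{N}$ is the direct sum of the images of the incidence maps of the $\mathscr{N}_i$, equivalently $n - l = \sum_i (n_i - l_i) = 2m - m_{rev}$. Since an incidence-independent decomposition satisfies $\delta \geq \sum_i \delta_i$, combining the two bounds gives $\delta = \sum_i \delta_i = m - m_{rev}$, which is the claim.

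The step I expect to be the main obstacle is incidence-independence. The subnetworks can a priori share complexes — the zero complex whenever some $R_i$ or $P_i$ is empty, or a common regulator set when $R_i = R_j$ or $R_i = P_j$ — and such coincidences are exactly what could, if present, let the reaction edges of distinct species close up a cycle in complex space, dropping $n - l$ below $2m - m_{rev}$ and breaking the direct-sum decomposition of the incidence images. The core of the argument is therefore a careful bookkeeping of the shared complexes, showing that each one contributes to the incidence image of at most one subnetwork so that these images stay linearly independent; establishing this is precisely what upgrades the independent bound $\delta \leq m - m_{rev}$ to the asserted equality.
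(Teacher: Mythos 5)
Your per-subnetwork computations ($s_i=1$, $\delta_i=0$ for reversible and $\delta_i=1$ for irreversible species), your proof of independence of the species decomposition, and the resulting bound $\delta \leq \sum_i \delta_i = m-m_{rev}$ are all correct and coincide with the paper's argument. But the step you defer as ``the main obstacle'' --- proving incidence-independence by bookkeeping the shared complexes --- is not a gap that careful work can close: it genuinely fails, and the statement you set out to prove is false. This is the entire point of Section~4 of the paper, which quotes Proposition~10 of Arceo et al.\ precisely in order to correct it. The paper exhibits the NRP full network as a counterexample: it is an NSR $S$-system CRN with $m=40$ dependent species, none reversible ($m_{rev}=0$), yet $\delta = 39 < 40 = m - m_{rev}$. (From Table~1, $n-l = 98-19 = 79$, whereas $\sum_i (n_i - l_i) = 2m - m_{rev} = 80$, so the incidence images of the subnetworks fail to be in direct sum by exactly one dimension.) The embedded network $\mathscr{N}_2^*$ of the NRP model is a second counterexample. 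The coincidences you correctly identify as the danger --- the zero complex when some regulator set is empty, or equalities of the form $R_i = R_j$ or $R_i = P_j$ between different species' regulator complexes --- really do close up a cycle in complex space in these networks, dropping $n - l$ below $2m - m_{rev}$; no argument can show ``each shared complex contributes to the incidence image of at most one subnetwork,'' because a shared complex by definition lies in the incidence picture of every subnetwork containing it.

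What you have in fact proved is the paper's corrected theorem: the reaction pairs $\mathscr{R}'_i = \{R_i \rightarrow X_i + R_i,\; X_i + P_i \rightarrow P_i\}$ partition the reaction set (the paper verifies this disjointness case by case, using the NSR hypothesis to rule out coincidences between inflow and outflow reactions of distinct species --- a point your proposal takes for granted); the induced species decomposition is independent, giving $\delta \leq m - m_{rev}$; and equality holds \emph{if} the decomposition is additionally incidence-independent, i.e.\ bi-independent, via Corollary~1 exactly as in your final display. So the right fix is not to complete your last paragraph but to delete it: demote the equality to an inequality, state incidence-independence as a sufficient (not automatic) condition for equality, and cite the NRP network to show the condition cannot be dropped.
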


As mentioned earlier, the equality in the formula does not hold in general. Computations for the two counterexamples of this proposition show that in  NRP $\mathscr{N}$ it has $m=40$ species (all of which are dependent variables), no reversible ones ($m_{rev}=0$) but deficiency $\delta=39 < 40-0$. Also, $\delta=13<14-0$ in NRP $\mathscr{N}^*_2$ (see Table \ref{tab:NRP and STR network numbers N and N*i }).


A decomposition theory approach is used to restate and provide a new proof of this Proposition. Fundamental concepts and known results of decomposition theory are discussed in Section \ref{decomposition}. But before presenting the reformulated deficiency formula for $S$-system CRNs, we first introduce the new concept of  \enquote{incidence-independence} and derive its basic properties. Incidence-independence is a natural complement to the\enquote{independence} property of a decomposition, but has so far been neglected in the CRNT literature.

\begin{definition}
A decomposition \{$\mathscr{N}_1,...,\mathscr{N}_k\}$ of a CRN is \textbf{incidence-independent} if and only if the image of the incidence map of $\mathscr{N}$ is the direct sum of the images of the incidence maps of the subnetworks. 
\end{definition} 

Since the direct sum property of the images is equivalent to the dimension of the image of $I_a$ $=$ the sum of the dimensions of the subnetwork incident map images, an equivalent formulation is the following equality: 
\begin{equation}
\label{eqn:incidence-independence}
n-l = \sum(n_i-l_i).
\end{equation}

\begin{example}
The linkage classes form the primary example of an incidence-independent decomposition, since $n=\sum n_i$ and $l=\sum l_i$.
\end{example}

We have the following proposition which is analogous to Proposition \ref{prop: independent coarsening}:

\begin{proposition}
\label{prop:incidence-independent coarsening}
If a decomposition is incidence-independent, then any coarsening of the decomposition is incidence-independent.
\end{proposition}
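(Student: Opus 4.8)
The plan is to mirror the proof of Proposition~\ref{prop: independent coarsening}, replacing the stoichiometric subspaces by the images of the incidence maps. First I would fix notation: let $\{\mathscr{N}_1,\dots,\mathscr{N}_k\}$ be the given incidence-independent decomposition and let $\{\mathscr{M}_1,\dots,\mathscr{M}_p\}$ be a coarsening. Since being a refinement is equivalent to each block of the coarser partition being a disjoint union of blocks of the finer one, each $\mathscr{M}_j$ is, at the level of reaction sets, the disjoint union of a subcollection $\{\mathscr{N}_i : i \in G_j\}$, where $\{G_1,\dots,G_p\}$ is a partition of $\{1,\dots,k\}$.

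Next I would recall that the image of the incidence map $I_a$ of any network is spanned by the vectors $I_a(r)$, one for each reaction $r$, each the difference of the standard basis vectors indexed by the product and reactant complexes of $r$. Consequently, since the reaction set of $\mathscr{M}_j$ is the union of those of the $\mathscr{N}_i$ with $i\in G_j$, we get $\Ima(I_a^{\mathscr{M}_j})=\sum_{i\in G_j}\Ima(I_a^{\mathscr{N}_i})$. Summing over $j$ recovers $\Ima(I_a^{\mathscr{N}})$, so the only thing left to verify is that this sum is direct.

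For directness I would argue exactly as in Proposition~\ref{prop: independent coarsening}. Suppose $x$ lies in $\Ima(I_a^{\mathscr{M}_j})\cap\sum_{j'\neq j}\Ima(I_a^{\mathscr{M}_{j'}})$. Writing $x$ on the left as a sum of elements of $\Ima(I_a^{\mathscr{N}_i})$ with $i\in G_j$ and on the right as a sum of elements of $\Ima(I_a^{\mathscr{N}_i})$ with $i\notin G_j$, the incidence-independence of the fine decomposition --- that is, the direct-sum property $\Ima(I_a^{\mathscr{N}})=\bigoplus_i\Ima(I_a^{\mathscr{N}_i})$ --- forces every fine contribution to vanish, whence $x=0$. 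This is nothing but the elementary fact that regrouping the summands of a direct sum according to any partition of the index set again yields a direct sum.

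The step I expect to require the most care is the second one: checking that $\Ima(I_a^{\mathscr{M}_j})$ is genuinely the sum of the fine images even though the merged subnetworks $\mathscr{N}_i$ with $i\in G_j$ may share complexes. This is not a true obstruction, since the incidence image is determined entirely by the spanning set $\{I_a(r)\}$ and not by the complex sets themselves, but it is the point where the argument differs in flavour from the stoichiometric case and so merits an explicit sentence. As a consistency check one could also verify the equivalent dimension form~\eqref{eqn:incidence-independence}, $n-l=\sum(n_i-l_i)$, for the coarsening, but the direct-sum argument above is already self-contained.
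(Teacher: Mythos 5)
Your proposal is correct and follows essentially the same route as the paper, which simply states that the argument of Proposition~\ref{prop: independent coarsening} carries over with the stoichiometric subspace replaced by the image of the incidence map; your regrouping-of-direct-summands argument is precisely that argument spelled out. The extra observation that $\Ima(I_a^{\mathscr{M}_j})$ is the sum of the fine images because the incidence image is spanned by the reaction-indexed vectors $\omega_{\pi(r)}-\omega_{\rho(r)}$ (independently of shared complexes) is a worthwhile clarification, but it does not change the approach.
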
 

The proof has the same argumentation as in Proposition \ref{prop: independent coarsening} now applied to the image of the incidence map instead of the stoichiometric subspace.

The following proposition is a familiar property of linkage classes:

\begin{proposition}
\label{prop:deficiency of incidence independent}
For an incidence independent decomposition $\mathscr{N}=\mathscr{N}_1 +...+\mathscr{N}_k$, then $\delta \geq \delta_1+...+\delta_k$.
\end{proposition}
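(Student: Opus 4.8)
The plan is to compute the difference $\delta - (\delta_1 + \cdots + \delta_k)$ directly from the deficiency formula $\delta = n - l - s$ and to reduce everything to two structural facts already available in the excerpt: the incidence-independence equality~(\ref{eqn:incidence-independence}) and the general decomposition inequality $s \leq \sum s_i$ recalled in Section~\ref{decomposition}.

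First I would expand each subnetwork deficiency as $\delta_i = n_i - l_i - s_i$ and sum over $i$, grouping the terms so that $\sum \delta_i = \sum (n_i - l_i) - \sum s_i$. The first sum is exactly the quantity that the hypothesis controls. Since the decomposition is incidence-independent, equation~(\ref{eqn:incidence-independence}) gives $\sum (n_i - l_i) = n - l$, and substituting yields $\sum \delta_i = (n-l) - \sum s_i$. Comparing this with $\delta = n - l - s$, the common term $n-l$ cancels and the difference collapses to $\delta - \sum \delta_i = \sum s_i - s$.

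Finally I would invoke the basic property of any reaction-set decomposition, namely that the stoichiometric subspace of $\mathscr{N}$ is the (not necessarily direct) sum of the stoichiometric subspaces of the subnetworks, whence $s = \dim(\sum S_i) \leq \sum \dim S_i = \sum s_i$. This makes $\sum s_i - s \geq 0$, so $\delta \geq \delta_1 + \cdots + \delta_k$, as claimed.

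There is essentially no analytic obstacle here; the work is bookkeeping. The one point deserving emphasis is the asymmetry between the two ingredients: incidence-independence supplies an \emph{equality} for the incidence quantity $n-l$, whereas for the stoichiometric quantity $s$ only the general \emph{inequality} $s \leq \sum s_i$ is available. It is precisely this asymmetry that reverses the direction of the resulting bound relative to the independent case $\delta \leq \sum \delta_i$ from \cite{FLRM2017}, and (as the excerpt notes for linkage classes) equality $\delta = \sum \delta_i$ is recovered exactly when the decomposition is simultaneously independent, i.e. when $s = \sum s_i$ as well.
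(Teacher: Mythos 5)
Your proposal is correct and follows essentially the same route as the paper's own proof: both expand $\delta = n-l-s$ and $\sum\delta_i = \sum(n_i-l_i) - \sum s_i$, use incidence-independence to equate $n-l$ with $\sum(n_i-l_i)$, and then apply the general decomposition inequality $s \leq \sum s_i$ to obtain the bound. Your write-up is merely a more explicit version of the paper's one-line argument.
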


\begin{proof}
Since for any decomposition, $s \leq \sum s_i$, subtracting the LHS from $n-l$ and the RHS from $\sum n_i -\sum l_i$ delivers the claim.
\end{proof}

\begin{definition}
A decomposition is \textbf{bi-independent} if it is both independent and incidence-independent.
\end{definition}

The ILC property of linkage classes is the best known example of a bi-independent decomposition.

\begin{corollary}
\label{cor:deficiency of bi-independent}
For a bi-independent decomposition, $\delta = \delta _1 +...+ \delta _k$.
\end{corollary}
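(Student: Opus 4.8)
The plan is to read the result directly off the deficiency formula $\delta = n - l - s$, applied uniformly to $\mathscr{N}$ and to each subnetwork $\mathscr{N}_i$. Recall that the dimension of the image of the incidence map equals $n - l$, so the defining equality of incidence-independence is exactly equation \eqref{eqn:incidence-independence}, namely $n - l = \sum_i (n_i - l_i)$, while the defining equality of independence is $s = \sum_i s_i$. Since a bi-independent decomposition satisfies both by definition, I would simply substitute both into the formula:
\begin{equation*}
\delta = (n - l) - s = \sum_i (n_i - l_i) - \sum_i s_i = \sum_i \big((n_i - l_i) - s_i\big) = \sum_i \delta_i .
\end{equation*}

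An equivalent route, more in keeping with the narrative that presents incidence-independence as the complement of independence, is to obtain the equality by sandwiching. An independent decomposition gives $\delta \leq \delta_1 + \dots + \delta_k$ (the bound from \cite{FLRM2017} recalled in Section \ref{decomposition}), and Proposition \ref{prop:deficiency of incidence independent} supplies the reverse inequality $\delta \geq \delta_1 + \dots + \delta_k$ for an incidence-independent decomposition. A bi-independent decomposition is both, so the two one-sided estimates force equality. This framing has the merit of exhibiting the corollary as the precise meeting point of the two bounds already established in the paper.

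There is essentially no obstacle here: once the two defining equalities are in hand, the computation is immediate. The only point that needs a moment's care is the bookkeeping fact that $n - l$ is the rank of the incidence map for every subnetwork as well as for $\mathscr{N}$, so that the incidence-independence and independence hypotheses can be added termwise into $\delta = n - l - s$. I would present the direct substitution as the proof, since it requires no external citation and makes transparent that equality in the deficiency relation is exactly the conjunction of equality in the stoichiometric rank and equality in the incidence rank.
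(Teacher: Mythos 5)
Your proposal is correct, and your second route (the ``sandwich'') is precisely the paper's own proof: the authors cite the bound $\delta \leq \delta_1 + \dots + \delta_k$ for independent decompositions from \cite{FLRM2017} and combine it with the reverse inequality of Proposition~\ref{prop:deficiency of incidence independent} to force equality. Your preferred route, the direct substitution of $n - l = \sum_i (n_i - l_i)$ and $s = \sum_i s_i$ into $\delta = n - l - s$, is a genuinely different and more self-contained argument: it needs neither the external citation nor the one-sided propositions, only the two defining equalities (Equation~\eqref{eqn:incidence-independence} for incidence-independence and $s = \sum s_i$ for independence, both of which the paper establishes as equivalent to the respective direct-sum conditions). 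What the direct computation buys is transparency --- it exhibits the corollary as nothing more than termwise addition in the deficiency formula. What the paper's sandwiching buys is economy within its own narrative: it reuses the two one-sided bounds it has just set up and makes explicit that bi-independence is exactly the case where both bounds are tight. Either proof is acceptable; there is no gap in yours.
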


\begin{proof}
It was shown in \cite{FLRM2017}, that for an independent decomposition: $\delta \leq \delta _1 +...+ \delta _k$. Hence claim follows directly from the previous proposition.
\end{proof}

Another new concept that emerged in our restatement of the deficiency formula for $S$-system embedded networks is the species decomposition.

\begin{definition}
Let $R_i$ and $P_i$ be the set of variables regulating the inflow and outflow reactions of the species $X_i$ (= dependent variable) of an $S$ -system, respectively. A set of subnetworks $\{\mathscr{N}_i=(\mathscr{S}_i, \mathscr{C}_i,\mathscr{R}_i), i=1,...,m\}$ is a \textbf{species decomposition} of an $S$-system $\mathscr{N}$ if the subnetwork is formed by the pair of reactions associated with each species, i.e,  $\mathscr{R}_i=\{R_i \rightarrow X_i+R_i, X_i +P_i \rightarrow P_i\}$.
\end{definition}

\begin{remark}
A species decomposition is another way of decomposing a network since it partitions the reaction set. Also, in this decomposition, a network with $m$ species results into $m$ subnetworks. 
\end{remark}

\begin{example}
\label{example:species decomposition}
Consider an S-system with 3 species $\{X_1, X_2, X_3\}$. Since each species corresponds to a pair of reactions, we have 6 reactions in total. Suppose we have the following CRN $\mathscr{M}$: $\{X_1 \rightarrow 2X_1, X_1 +X_2 +X_3 \rightarrow X_2 +X_3,
X_1+X_2 \rightarrow X_1 + 2X_2, X_1+X_2 \rightarrow X_1,
X_2 +X_3 \rightarrow 2X_3+X_2, X_3 \rightarrow 0\}$. We have the following species decomposition (subnetworks) of $\mathscr{M}$:

$\mathscr{M}_1=\{X_1 \rightarrow 2X_1, \quad X_1 +X_2 +X_3 \rightarrow X_2 +X_3\}$

$\mathscr{M}_2=\{X_1+X_2 \rightarrow X_1 + 2X_2,\quad  X_1+X_2 \rightarrow X_1\}$

$\mathscr{M}_3=\{X_2 +X_3 \rightarrow 2X_3+X_2,\quad  X_3 \rightarrow 0\}$
\end{example}

Finally, we now reformulate the deficiency formula for $S$-system CRNs. The correct formula is an inequality, instead of an equality. 

To avoid the minor complications of independent variables, we consider the embedded representation of an $S$-system. In particular, the species set $\mathscr{S'}$ consists only of the dependent variables $\{X_1. ..., X_m\}$.

\begin{theorem}
Let $\mathscr{N'}=(\mathscr{S'},\mathscr{C'},\mathscr{R'})$ be the embedded representation of an $S$-system and $\mathscr{R}'_i$ denotes the set $\{R_i \rightarrow X_i + R_i, X_i + P_i \rightarrow P_i\}$
\begin{enumerate}[i)]
\item The sets $\{\mathscr{R}'_i\}$ form a partition of the reaction set $\mathscr{R}'$ and hence induce a  species decomposition.
\item The species decomposition is independent and implies that $\delta \leq m - m_{rev}$. If the species decomposition is also incidence-independent, i.e. bi-independent, then $\delta=m-m_{rev}$. 
\end{enumerate}
\end{theorem}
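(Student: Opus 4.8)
The plan is to prove the two assertions separately: part (i) is essentially bookkeeping on the reaction set, while part (ii) reduces to a per-species deficiency computation fed into the independence and bi-independence machinery already established above. For part (i), I would first record that the listed reactions are exactly one inflow $R_i \to X_i + R_i$ and one outflow $X_i + P_i \to P_i$ for each $i$, so there are $2m$ of them, and show they are pairwise distinct. The cleanest way is to compare reaction vectors in $\mathbb{R}^{\mathscr{S}'}\cong\mathbb{R}^m$: writing $\omega_i$ for the standard basis vector attached to $X_i$, the inflow of $X_i$ has reaction vector $(X_i+R_i)-R_i=\omega_i$ (the regulators $R_i$ cancel, and this holds even under self-regulation), while the outflow has reaction vector $-\omega_i$. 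Hence an inflow and an outflow can never coincide, nor can the inflows (resp. outflows) of two different species. Since $\mathscr{R}'_i$ collects precisely the inflow/outflow pair of $X_i$, the sets $\{\mathscr{R}'_i\}$ are disjoint and cover $\mathscr{R}'$, so they partition it, and Definition~\ref{def:subnetwork} yields the species decomposition.

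For the independence claim in part (ii), the key observation from the reaction-vector computation above is that both reaction vectors of $\mathscr{N}'_i$ equal $\pm\omega_i$, so the stoichiometric subspace $S_i$ is the one-dimensional span of $\omega_i$ and $s_i=1$. The full stoichiometric subspace $S$ is spanned by all $\omega_i$, hence $S=\mathbb{R}^m$ and $s=m=\sum_i s_i$; because the $S_i$ are distinct coordinate lines, $S=\bigoplus_i S_i$ is a direct sum, which is exactly the independence of the decomposition.

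To obtain the deficiency bound I would compute $\delta_i=n_i-l_i-s_i=n_i-l_i-1$ for each subnetwork. If $X_i$ is reversible ($R_i=P_i$) then $\mathscr{N}'_i$ is the reversible pair $R_i\leftrightarrow X_i+R_i$, giving $n_i=2$, $l_i=1$, so $\delta_i=0$. If $X_i$ is irreversible I would enumerate the possible coincidences among the four complexes $R_i,\,X_i+R_i,\,X_i+P_i,\,P_i$; in the generic case all four are distinct with $l_i=2$, and in the single possible collapsed case three survive with $l_i=1$, so in every irreversible case $n_i-l_i=2$ and $\delta_i=1$. Summing gives $\sum_i\delta_i=m-m_{rev}$. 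Independence together with the inequality $\delta\le\sum_i\delta_i$ for independent decompositions from \cite{FLRM2017} then yields $\delta\le m-m_{rev}$, and under the extra hypothesis of incidence-independence Corollary~\ref{cor:deficiency of bi-independent} upgrades this to $\delta=m-m_{rev}$.

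I expect the main obstacle to be the irreversible-case deficiency computation: one must verify that $\delta_i=1$ holds uniformly even when two of the four complexes of $\mathscr{N}'_i$ collapse. The finiteness and cleanliness of this case analysis rest on showing that at most one such coincidence can occur, since simultaneously demanding $R_i=X_i+P_i$ and $P_i=X_i+R_i$ would force $2X_i=0$, which is impossible; with that ruled out, only the distinct-complex and single-coincidence subcases remain, and each is checked to give $n_i-l_i=2$.
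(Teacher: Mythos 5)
Your proposal is correct and follows essentially the same route as the paper: establish pairwise distinctness of the $2m$ inflow/outflow reactions, observe that each subnetwork has stoichiometric subspace $\spn\{\omega_i\}$ so that $s=m=\sum s_i$ gives independence, compute $\delta_i=0$ or $1$ according to reversibility, and then invoke the independent-decomposition inequality of \cite{FLRM2017} together with Corollary~\ref{cor:deficiency of bi-independent}. Your reaction-vector argument for part (i) is a streamlined version of the paper's coordinate comparison, and your explicit check that a single collapse among the four complexes (e.g.\ $R_i=X_i+P_i$) still yields $n_i-l_i=2$ is in fact more careful than the paper, which simply asserts that an irreversible species generates four distinct complexes.
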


\begin{proof} 
i) We have to show that if $i\neq j$, then the intersection of the reaction sets is empty.
Suppose that the sets $\{\mathscr{R}'_i\}$ do not form a partition of the reaction set $\mathscr{R}'$. Then there exists two sets $\mathscr{R}'_i$ and $\mathscr{R}'_j$, where $i \neq j$, that has a common reaction. We consider the following cases: a) two inflow reactions coincide and b) an inflow reaction coincides with an outflow reaction. The remaining cases involve converse reactions and hence follow similarly.

We let $\mathscr{R}'_i=\{R_i \rightarrow X_i+R_i, X_i + P_i \rightarrow P_i\}$ and $\mathscr{R}'_j=\{R_j \rightarrow X_j+R_j, X_j + P_j \rightarrow P_j\}$. We denote the subvectors of $R_i, R_j, P_i$ and $P_j$ as  $V_i, V_j, W_i$ and $W_j$, respectively. We set $V_i=(a_1,...,a_m)$ and $W_j=(b_1,...,b_m)$. In connection to the $a$ elements of $V_i$, the two input reactions in $\mathscr{R}'_i$ and $\mathscr{R}'_j$ coincide thus $R_i=R_j$ and $X_i +R_i=X_j +R_j$. This implies that $V_i=V_j$ and $X_i + V_i=X_j+V_i$ or $(a_1,...,a_i+1,...,a_m)=(a_1,...,a_j+1,...,a_m)$. Since $i \neq j$, $a_i+1=a_i$ and $a_j=a_j+1$, a contradiction.

As for the $b$ elements of $W_j$, we assume that an inflow reaction in $\mathscr{R}'_i$ coincides with an $\mathscr{R}'_j$. Then $R_i=X_j+P_j$ and $X_i +R_i=P_j$. Thus, we have $V_i=X_j+W_j$ or $(a_1,...,a_j,...,a_m)=(b_1,...,b_j+1,...,b_m)$. This implies that $a_i=b_i$ and $a_j=b_j+1$. Similarly, $V_i+X_i=W_j$ implies that $a_i+1=b_i$ and $a_j=b_j$. Since $i \neq j$, $a_i=b_i=a_i+1$ and $b_j=a_j=b_j+1$, a contradiction.

ii) Note that the stoichiometric subspace $S_i$ of each of the $m$ subnetworks $\mathscr{Ri}'_i$ of the species decomposition is $\{X_i\}$. Thus, the rank of each $\mathscr{Ri}'_i$ is 1. Since there are $m$ subnetworks and the rank of an $S$-system is m, $s=m=s_1+...+s_m$ and this implies independence.  In a species decomposition, the subnetworks are either reversible or irreversible. Each reversible species generates 2 distinct complexes and 1 linkage class, while an irreversible one generates 4 distinct complexes and 2 linkage classes. Hence, $\delta_i=0$ if $X_i$ is reversible and $\delta_j=1$ if $X_j$ is irreversible. Since the species decomposition is independent, from Proposition 3.3 of \citep{FLRM2017}, we have $\delta \leq \delta_1 +...+\delta_m = m- m_{rev}$. From Corollary \ref{cor:deficiency of bi-independent}, if the decomposition is incidence-independent too, then equality holds.
\end{proof}

\begin{example}
The following table shows  some of the network numbers for the species decomposition of $\mathscr{M}$ in Example \ref{example:species decomposition}. 

\begin{table}[h!]
\footnotesize
\begin{center}   
\begin{tabular}{ccccc}
\hline\noalign{\smallskip}
Network Numbers & $\mathscr{M}$ & $\mathscr{M}_1$ & $\mathscr{M}_2$ & $\mathscr{M}_3$  \\
\noalign{\smallskip}\hline\noalign{\smallskip}
$m$				& 3	& 3 &	2 & 2 \\
$n$				& 9 & 4 &	3 & 4 \\
$l$				& 3 & 2 &  1 & 2 \\
$s$				& 3 & 1  & 1 & 1 \\
$\delta$			& 3 & 1  & 1 & 1 \\
\noalign{\smallskip}\hline
\end{tabular}
\end{center}
\end{table}
\noindent
The species decomposition of $\mathscr{M}$ is bi-independent since $s=\sum s_i$ and $\delta = \sum \delta_i$. Also, $\mathscr{N}$ has no reversible species thus, $\delta=3=m-m_{rev}$.
\end{example}

The following corollary provides a sufficient condition for the species decomposition of an $S$-system to be incidence-dependent:

\begin{corollary}
If $\{\mathscr{S}_1,...,\mathscr{S}_k\}$ is a partition of the species set, then the sets $R(\mathscr{S}_j):=\{\mathscr{R}_i$ with $X_i$ in $\mathscr{S}_j\}$ form a partition of the reaction set and hence induce an independent decomposition of the network. The deficiency of the network is less than or equal to the sum of the deficiencies of the subnetworks. If the deficiency is less than the sum, then the species decomposition of the network is incidence-dependent.
\end{corollary}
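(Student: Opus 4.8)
The plan is to recognize the decomposition $\{R(\mathscr{S}_j)\}$ as a \emph{coarsening} of the species decomposition $\{\mathscr{R}'_i\}$ treated in the preceding Theorem, after which every assertion follows by assembling results already in hand. Since $\{\mathscr{S}_1,\dots,\mathscr{S}_k\}$ partitions the species set, each $X_i$ lies in exactly one block $\mathscr{S}_{j(i)}$, and by definition $R(\mathscr{S}_j)=\bigcup_{X_i\in\mathscr{S}_j}\mathscr{R}'_i$; so the blocks $R(\mathscr{S}_j)$ are obtained by grouping the blocks $\mathscr{R}'_i$ according to the block of their regulated species.

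For the partition claim I would invoke part (i) of the preceding Theorem, which already gives that $\{\mathscr{R}'_i\}$ partitions $\mathscr{R}'$. Disjointness of the $R(\mathscr{S}_j)$ then holds because distinct blocks $\mathscr{S}_j$ collect disjoint families of the pairwise-disjoint $\mathscr{R}'_i$, while covering holds because every reaction lies in some $\mathscr{R}'_i$ and hence in the unique $R(\mathscr{S}_{j(i)})$ it determines — i.e. a coarsening of a partition is again a partition. Independence is then immediate from part (ii) of the Theorem (the species decomposition is independent) together with Proposition \ref{prop: independent coarsening}, since any coarsening of an independent decomposition remains independent; and the bound $\delta\le\sum_j\delta_j$ is the independent-decomposition inequality of \cite{FLRM2017}.

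For the final implication I would argue by contraposition through Corollary \ref{cor:deficiency of bi-independent}. As $\{R(\mathscr{S}_j)\}$ is already independent, were it also incidence-independent it would be bi-independent, and the Corollary would force $\delta=\sum_j\delta_j$; thus $\delta<\sum_j\delta_j$ excludes incidence-independence, so the decomposition is incidence-dependent.

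The individual steps are mechanical; the point that needs care — and which I regard as the main obstacle — is the assignment rule defining $R(\mathscr{S}_j)$. The reactions $R_i\to X_i+R_i$ and $X_i+P_i\to P_i$ generally involve regulator species $R_i,P_i$ lying in blocks other than $\mathscr{S}_{j(i)}$, so it must be made explicit that a reaction is placed in $R(\mathscr{S}_j)$ according to the single regulated species $X_i$ it is associated with, not according to which species happen to appear in it. With that convention the assignment $\mathscr{R}'_i\mapsto\mathscr{S}_{j(i)}$ is single-valued, the coarsening is genuine, and Proposition \ref{prop: independent coarsening} together with Corollary \ref{cor:deficiency of bi-independent} apply verbatim to $\{R(\mathscr{S}_j)\}$.
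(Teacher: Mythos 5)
The paper states this corollary without its own proof, so there is nothing in-text to compare against; your derivation is plainly the intended one, assembling the preceding Theorem with the coarsening machinery, and it is essentially correct — including the point you rightly flag, that each reaction pair is assigned to $R(\mathscr{S}_{j})$ by its regulated species $X_i$ and not by the regulator species appearing in it. The one place where your argument stops a line short is the final implication: contraposition through Corollary \ref{cor:deficiency of bi-independent} shows that $\delta<\sum_j\delta_j$ forces the \emph{coarsened} decomposition $\{R(\mathscr{S}_j)\}$ to be incidence-dependent, but the corollary asserts incidence-dependence of the \emph{species decomposition} $\{\mathscr{R}'_i\}$ itself (this is how the paper frames and uses the result, e.g.\ in Example \ref{Ex:species decomposition m-mrev}). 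To close this, run the contrapositive through Proposition \ref{prop:incidence-independent coarsening}: if the species decomposition were incidence-independent, then so would be its coarsening $\{R(\mathscr{S}_j)\}$, which, being independent, would be bi-independent and would force $\delta=\sum_j\delta_j$, a contradiction. With that single additional appeal to Proposition \ref{prop:incidence-independent coarsening}, your proof delivers exactly the stated conclusion.
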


\begin{example}
\label{Ex:species decomposition m-mrev}
Since for the full network $\mathscr{N}$ of the NRP model, deficiency $\delta=39<40=m-m_{rev}$, the species  decomposition of the network is incidence-dependent. The species decomposition of the NRP embedded representation $\mathscr{N}^*_2$ is also incidence-dependent since its deficiency  $\delta=14<13=m-m_{rev}$.
\end{example}


\section{Properties of the S-system CRNs and their embedded networks}
\label{sec:5}

We discuss some important network properties of NRP and STR full networks and their embedded subnetworks. In Section \ref{sec:5.1}, we consider their point terminality (PT), sufficient reactant diversity (SRD), terminality boundedness by deficiency (TBD) and equality of reactant and stoichiometric subspaces (RES) as well as the relationships between these characteristics. In Section \ref{sec:5.2} , we derive their discordance from a general result about S-system networks.

\subsection{Network properties of the NRP and STR models}
\label{sec:5.1}

The NRP and STR  full models and their embedded networks are fully open networks since each species has an outflow reaction as seen on their corresponding CRN in the Supplementary Information. Hence, the stoichiometric subspace $S=\mathbb{R}^\mathscr{S}$ thus, the rank of the network is equal to the number of species, i.e $s=m$ as shown in the first column of Tables \ref{tab:NRP network properties } and \ref{tab:STR network properties}. 

\begin{table}[h!]
\footnotesize
\caption{NRP Network properties.}
\label{tab:NRP network properties }  
\centering     
\begin{tabular}{cccccccccc}
\hline\noalign{\smallskip}
\textbf{Model} & $s = m$ & \textbf{WR} & \textbf{TM} &	\textbf{ET} & \textbf{ILC} & \textbf{PT} & \textbf{SRD} & \textbf{TBD} & \textbf{RES}\\
\noalign{\smallskip}\hline\noalign{\smallskip}
$\mathscr{N}$		& 40 & NO & NO & NO	&	NO & YES & YES & YES & YES \\
$\mathscr {N}_1^*$	& 8 & NO & NO & NO 	&	NO	& YES & YES & YES & YES \\
$\mathscr {N}_2^*$	& 17 & NO & NO & NO &	NO	& YES & YES & YES & YES \\
$\mathscr {N}_3^*$	& 15 & NO & NO & NO &	NO	& YES & YES & YES & YES \\
\noalign{\smallskip}\hline
\end{tabular}
\end{table}

\begin{table}[h!]
\footnotesize
\caption{STR Network properties}
\label{tab:STR network properties} 
\centering      
\begin{tabular}{cccccccccc}
\hline\noalign{\smallskip}
\textbf{Model} & $s = m$ & \textbf{WR} & \textbf{TM} & \textbf{ET} & \textbf{ILC} & \textbf{PT} & \textbf{SRD} & \textbf{TBD} & \textbf{RES}\\
\noalign{\smallskip}\hline\noalign{\smallskip}
$\mathscr{N}$		& 37 & NO & NO	&	X 	& NO & YES & YES & YES & YES \\
$\mathscr {N}_1^*$	& 7 & NO & NO 	&	NO	& NO & YES & YES & YES & YES \\
$\mathscr {N}_2^*$	& 14 & NO & NO & 	X	&	NO & YES & YES & YES & YES \\
$\mathscr {N}_3^*$	& 16 & NO & NO & 	X &	NO	& YES & YES & YES & YES \\
\noalign{\smallskip}\hline
\end{tabular}
\end{table}

In a weakly reversible (WR) network, $sl=l$. The network numbers of the NRP and STR models in Table \ref{tab:NRP and STR network numbers N and N*i } shows that $sl \neq l$ hence, the networks are not weakly reversible. Since weakly reversible CRNs are t-minimal ($t=l$), the \enquote{NO's} in the third column follow from those in the second column of Tables \ref{tab:NRP network properties } and \ref{tab:STR network properties}. 

We used CoNtRol, an online toolbox to test the endotacticity (ET) of the network. Due to the complexity of some networks, ContRol was not able to generate the said test. That is why the ET column in Tables \ref{tab:NRP network properties } and \ref{tab:STR network properties} is incomplete. We also used another software tool, ERNEST, consisting of extensible Matlab code which is specifically valuable for large networks such as the NRP and STR to determine the independent linkage class (ILC) property and the network numbers in Table \ref{tab:NRP and STR network numbers N and N*i }. From the report, it showed that the rank of the networks are not equal to the sum of the rank of their corresponding linkage classes. Thus, the \enquote{NO's} in the ILC column.

The WR, TM and ET columns in Tables \ref{tab:NRP network properties } and \ref{tab:STR network properties} are well-studied network properties in the current CRNT literature. Meanwhile, the point terminal (PT) and terminality bounded by deficiency (TBD) columns were introduced in \cite{AJLM2017} as additional network properties in their 15 BST case studies. They defined a network to be point terminal if $t=n-n_r$. For the TBD, we first introduce some convenient notation:

\begin{definition}  
The \textbf{terminality} $\tau(D)$ (or simply $\tau$) of a digraph $D$ is the non-negative integer $t - l$.
\end{definition}

In other words, a $t$-minimal digraph  is one with zero terminality and a non-$t$-minimal one with positive terminality.

\begin{definition}
A CRN is of type \textbf{terminality bounded by deficiency} (\textbf{TBD}) if $t - l \leq \delta$, otherwise, of type \textbf{terminality not deficiency-bounded} (\textbf{TND}), i.e. $t - l > \delta$. 
\end{definition}

A remark of Feliu and Wiuf \cite{FW2012} that $q < s$ (i.e. low reactant rank) implied degeneracy of all positive equilibria of a mass action system led us to the concept of \enquote{sufficient reactant diversity} (SRD) as a necessary condition for the existence of a non-degenerate equilibrium $(q \geq s \Rightarrow n_r \geq s)$.

\begin{definition}
A \textbf{CRN} has \textbf{low reactant diversity} (\textbf{LRD}) if $n_r < s$, otherwise it has \textbf{sufficient reactant diversity} (\textbf{SRD}). An SRD network has \textbf{high reactant diversity} (\textbf{HRD}) or \textbf{medium reactant diversity} (\textbf{MRD}) if $n_r > s$ or $= s$, respectively.
\end{definition}

The next Proposition clarifies the relationship between deficiency-bounded terminality and sufficient reactant diversity.

\begin{proposition}\label{TBDSRDprop}
Let $\mathscr N$ be a chemical reaction network.
\begin{enumerate}
\item [(i)] A network with deficiency-bounded terminality (TBD) has sufficient reactant diversity.
\item [(ii)] If the network is point terminal, then the converse also holds, i.e. TBD $\Leftarrow$ SRD (or equivalently TND $\Rightarrow$ LRD).
\end{enumerate}
\end{proposition}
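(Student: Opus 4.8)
The plan is to reduce both statements to inequalities among the network numbers and to isolate a single structural inequality, $t \ge n - n_r$, from which both parts follow by rearrangement. Using $\delta = n - l - s$, the condition TBD ($t - l \le \delta$) simplifies, after cancelling $l$, to $t \le n - s$, equivalently $s \le n - t$; SRD is by definition $n_r \ge s$. So part (i) reduces to the chain $n_r \ge n - t \ge s$, where the second inequality is exactly TBD.

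First I would prove the structural inequality $t \ge n - n_r$. The idea is that every complex that never occurs as a reactant forms a terminal strong linkage class by itself: having no outgoing reaction, it cannot be strongly connected to any other complex, and no arc leaves its singleton strong linkage class. The $n - n_r$ non-reactant complexes therefore yield $n - n_r$ distinct singleton terminal strong linkage classes, so $t \ge n - n_r$, i.e. $n_r \ge n - t$. Combined with $n - t \ge s$ from TBD, this gives $n_r \ge s$ and proves (i).

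For (ii) I would upgrade this inequality to an equality under the point terminal hypothesis. Any terminal strong linkage class of size at least two consists entirely of reactant complexes, since inside a strongly connected set of two or more complexes every complex has an outgoing arc to another member; conversely a reactant complex cannot be a singleton terminal class, as its outgoing reaction (trivial self-reactions being excluded) would leave the class. Hence $t = (n - n_r) + k$, where $k$ counts the terminal strong linkage classes of size at least two, and point terminality $t = n - n_r$ forces $k = 0$, i.e. $n_r = n - t$. Then SRD gives $n - t = n_r \ge s$, so $t \le n - s$, which rearranges to $t - l \le \delta$, namely TBD.

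The hard part will be the careful bookkeeping behind $t \ge n - n_r$ and its equality form: identifying the non-reactant complexes precisely with the singleton terminal strong linkage classes, and excluding reactant complexes from being singleton terminal classes (which is where the convention forbidding trivial reactions enters). Once this correspondence is established, both implications are direct consequences of $\delta = n - l - s$.
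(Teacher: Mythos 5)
Your proposal is correct and follows essentially the same route as the paper: both reduce TBD to $n - t \geq s$ via $\delta - \tau(\mathscr N) = n - t - s$ and then chain this with $n_r \geq n - t$, with equality under point terminality. The only difference is that you supply an explicit combinatorial justification of $n_r \geq n - t$ (non-reactant complexes as singleton terminal strong linkage classes), which the paper simply asserts; that bookkeeping is sound and a worthwhile addition, but it does not change the argument.
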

\begin{proof}
From $\delta - \tau(\mathscr N) = n - l - s - (t - l) = n - t - s$, we obtain TBD, i.e. $\delta – \tau(\mathscr N ) \geq 0$ iff  $n - t \geq s$. Since $n_r \geq n - t$ for any network, TBD implies SRD. If $\mathscr N$ is point terminal $n_r = n- t$, showing that the converse also holds.  
\end{proof}

\begin{corollary}
A $t$-minimal network is an SRD network.
\end{corollary}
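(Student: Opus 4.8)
The plan is to derive this directly from Proposition \ref{TBDSRDprop}(i), which has just established that a network of type TBD is an SRD network. It therefore suffices to show that every $t$-minimal network is of type TBD, after which the conclusion is immediate.

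First I would unwind the definitions. A $t$-minimal network is precisely one with zero terminality, i.e. $\tau(\mathscr N) = t - l = 0$, while the TBD condition asks that $t - l \leq \delta$, that is $\tau(\mathscr N) \leq \delta$. The single ingredient needed to bridge these is the non-negativity of the deficiency: for any CRN one has $\delta = n - l - s \geq 0$, since the rank $s$ of the stoichiometric subspace never exceeds $n - l$. Combining $\tau(\mathscr N) = 0$ with $\delta \geq 0$ yields $\tau(\mathscr N) = 0 \leq \delta$, so the network is TBD, and Proposition \ref{TBDSRDprop}(i) then gives SRD.

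There is essentially no obstacle here, as the corollary is a degenerate case of the preceding proposition; the only point requiring care is the invocation of $\delta \geq 0$, which is a standard structural fact of CRNT. An equally short alternative would bypass the TBD property altogether and argue from the identity $\delta - \tau(\mathscr N) = n - t - s$ derived in the proof of Proposition \ref{TBDSRDprop}: substituting $\tau(\mathscr N) = 0$ gives $n - t - s = \delta \geq 0$, hence $n - t \geq s$, and since $n_r \geq n - t$ holds for every network, one concludes $n_r \geq s$, which is exactly SRD.
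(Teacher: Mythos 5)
Your argument is correct and matches the paper's intended route: the corollary is left without an explicit proof, but the remark immediately following it records the chain $t$-minimal $\Rightarrow$ TBD $\Rightarrow$ SRD, which is exactly your derivation via $\tau(\mathscr N)=t-l=0\leq\delta$ followed by Proposition \ref{TBDSRDprop}(i). Your alternative direct computation via $\delta-\tau(\mathscr N)=n-t-s$ and $n_r\geq n-t$ is also sound and is just an unwinding of the same proposition's proof.
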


\begin{remark}
We have the following ascending chain of networks:  reversible $\Rightarrow$ weakly reversible $\Rightarrow$ $t$-minimal $\Rightarrow$ TBD $\Rightarrow$ SRD. 
\end{remark}

\begin{example}
Since all the $S$-system embedded networks in Tables \ref{tab:NRP network properties } and \ref{tab:STR network properties} are point terminal, Theorem~\ref{thm3}.(i) tells us that the information in the TBD and SRD columns are the same.
\end{example}

In \cite{AJLM2018}, a classification of CRNs based on the intersection $R \cap S$ of the reactant and stoichiometric subspaces was introduced. Open networks belong to the network class SRS (stoichiometry-determined reactant subspace) with $R \cap S = R$, since $R \subset S = \mathbb R^\mathscr S$. Tables \ref{tab:NRP network properties } and \ref{tab:STR network properties} show that all $S$-system embedded networks belong to the subset of coincident $R$ and $S$ subspaces, which we denote by RES ($R$ Equals $S$). 

We observe that, for open networks, $R = S$ is equivalent to $q = s$, i.e. the network's rank difference $\Delta(\mathscr N ) = 0$. 

We include the statements of the Proposition \ref{TBDSRDprop} to collect the relationships between TBD, SRD and RES in the following Theorem:

\begin{theorem}\label{thm3}
\begin{enumerate}
\item [(i)] A TBD network has sufficient reactant diversity. If the network is point terminal, then the converse holds, i.e. TBD $\Leftrightarrow$  SRD.
\item [(ii)] A TPD network has high reactant diversity (HRD). If the network is point terminal, then the converse also hold, i.e. TPD $\Rightarrow$ HRD.
\item [(iii)] An RES network has sufficient reactant diversity.
\item [(iv)] RES $\cap$ TBD = RES $\Leftrightarrow n_r \geq s + t_c$ . If the network is point terminal, then RES implies TBD. If the network is not point terminal, then it has high reactant diversity $(n_r > s)$ and reactant deficiency $\delta_\rho > 0$. 
\end{enumerate}
\end{theorem}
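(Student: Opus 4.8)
The whole theorem is driven by one bookkeeping identity together with a handful of universal inequalities, so the plan is to reduce each clause to arithmetic on the network numbers rather than to any new structural argument. The central identity is the one already obtained in the proof of Proposition~\ref{TBDSRDprop},
\begin{equation*}
\delta - \tau(\mathscr N) = (n - l - s) - (t - l) = n - t - s,
\end{equation*}
and to it I will adjoin three facts that hold for every (open) network: the reactant rank cannot exceed the number of reactant complexes, $q \leq n_r$; each pure product complex is its own terminal strong linkage class, so $t \geq n - n_r$, equivalently $n - t \leq n_r$ with equality exactly in the point terminal case; and, as noted just before the theorem, RES is equivalent to $q = s$. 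I will also use that the reactant deficiency is $\delta_\rho = n_r - q$.

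Clause (i) is verbatim Proposition~\ref{TBDSRDprop}, so it needs no new proof. For clause (ii) I would run the same computation with a strict inequality: a TPD network satisfies $\tau(\mathscr N) < \delta$, so the identity gives $n - t - s > 0$, and then $n_r \geq n - t > s$ forces $n_r > s$, i.e. HRD. When the network is point terminal the inequality $n - t \leq n_r$ sharpens to the equality $n_r = n - t$, so $n_r > s$ returns $n - t > s$ and hence $\tau(\mathscr N) < \delta$; this supplies the converse HRD $\Rightarrow$ TPD and upgrades the statement to an equivalence. Clause (iii) is then immediate: RES gives $q = s$, and combining with $q \leq n_r$ yields $n_r \geq s$, which is SRD.

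Clause (iv) is where the work concentrates. The plan is to rewrite the TBD condition in reactant-complex terms by setting $t_c := t - (n - n_r)$, which is non-negative by $t \geq n - n_r$ and vanishes precisely when the network is point terminal. Since $n - t = n_r - t_c$, the TBD inequality $n - t \geq s$ becomes $n_r \geq s + t_c$; this is the asserted characterization, and intersecting with the RES class gives RES $\cap$ TBD $=$ RES $\Leftrightarrow n_r \geq s + t_c$. The point terminal case then drops out, because $t_c = 0$ reduces the condition to $n_r \geq s$, which already holds under RES by clause (iii), so RES implies TBD. Finally, for a network that is RES and TBD but not point terminal we have $t_c \geq 1$, whence $n_r \geq s + t_c \geq s + 1 > s$ gives HRD, and since RES supplies $q = s$ we obtain $\delta_\rho = n_r - q = n_r - s \geq t_c > 0$.

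The only genuinely delicate point is clause (iv): one must read the set equation RES $\cap$ TBD $=$ RES as the assertion that every RES network under consideration is TBD, track the sign of $t_c$ carefully, and be explicit about the standing hypotheses (RES together with membership in TBD) under which the non-point-terminal dichotomy is claimed. Everything else is a direct substitution into the master identity, so I expect no further obstacles beyond fixing the bookkeeping for $t_c$ and $\delta_\rho$.
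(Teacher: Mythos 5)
Your proposal is correct and follows essentially the same route as the paper: everything is reduced to the identity $\delta - \tau(\mathscr N) = n - t - s = n_r - t_c - s$ together with the universal inequalities $q \leq n_r$ and $t \geq n - n_r$, with RES entering only through $q = s$. Your explicit flagging that the final dichotomy in (iv) (HRD and $\delta_\rho > 0$ for non--point-terminal networks) requires the standing hypothesis RES together with TBD is a small improvement in precision over the paper's proof, which leaves that hypothesis implicit, but it does not change the argument.
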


\begin{proof}
\begin{enumerate}
\item [(i)] was already shown in Proposition \ref{TBDSRDprop}.
\item [(ii)] $t-l<\delta\Rightarrow s<n-t\Rightarrow s<n_r-t_c \Rightarrow n_r>s$, since $t_c\geq 0$. If the network is point terminal, $t_c=0$, and the converse holds.
\item [(iii)] For any CRN, we have $n_r\geq q$. Since the CRN is RES, $q = s$, which establishes the sufficient reactant diversity. 
\item [(iii)] $\delta - \tau(\mathscr N ) = n - t - s = n - (t_c + t_p) - s = n - (t_c + n - n_r) - s = n_r - q - t_c = n_r - (s + t_c)$.  Hence TBD $\Leftrightarrow$  $n_r \geq s + t_c$, so that the first claim follows. For point terminal networks $t_c = 0$, and (ii) establishes the validity of the RHS. If the network is not point terminal, $t_c > 0$, hence $n_r > s$ and $\delta_\rho : = n_r - q = n_r - s > 0$. 
\end{enumerate}
\end{proof}

\begin{remark}
The inequality $n_r \geq s + t_c$ expresses clearly which additional characteristic an RES network needs--beyond SRD--to imply deficiency-bounded terminality.
\end{remark}

Figure \ref{ssystemembedded} illustrates the relationships for $S$-system embedded networks.

\begin{figure}[h!]
\begin{center}
\includegraphics[width=0.3\textwidth]{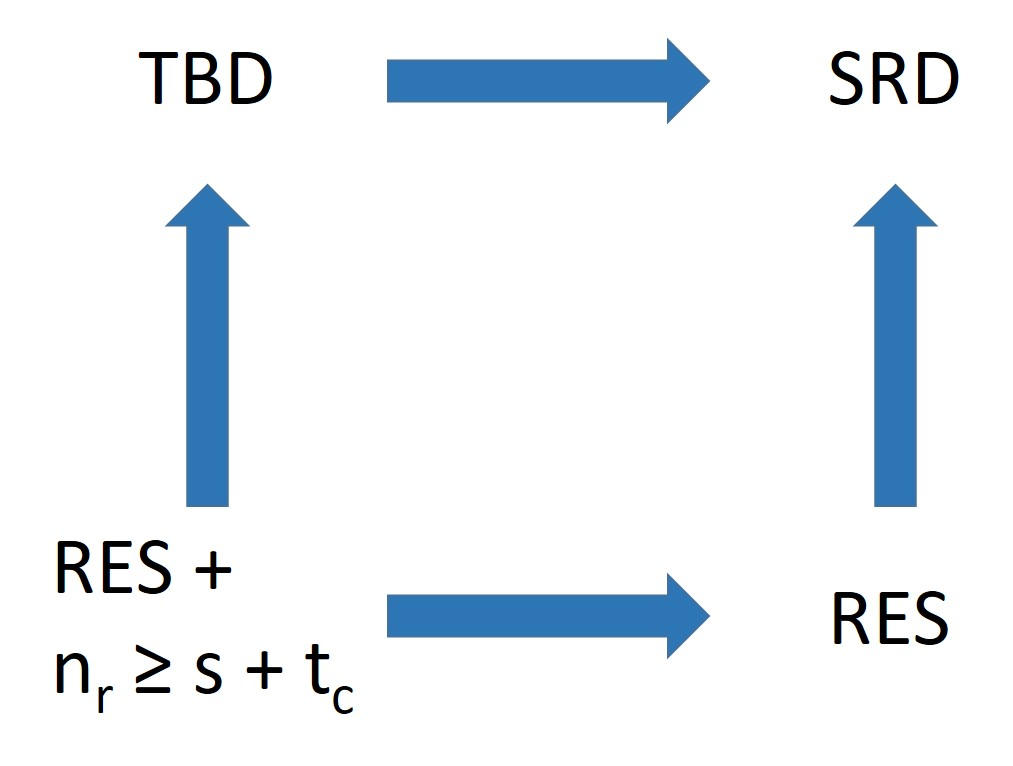} 
\caption{Relationships for S-system embedded networks}
\label{ssystemembedded}
\end{center}
\end{figure}

These network properties of the NRP and STR show a very high degree of coincidence with the eight embedded S-systems from BST Case Studies in \cite{AJLM2017} as seen in Table \ref{table: 8 embedded}. 

\begin{table}[h!]
\footnotesize
\caption{Network properties of $S$-system embedded networks of the BST Case Studies.}
\label{table: 8 embedded}
\centering
\begin{tabular}{cccccccccc}
\hline\noalign{\smallskip}
\textbf{Model}&$s=m$& \textbf{WR}&\textbf{TM}&\textbf{ET} &\textbf{ILC} &\textbf{PT}& \textbf{SRD} &\textbf{TBD} &\textbf{RES} \\
\noalign{\smallskip}\hline\noalign{\smallskip}
ARL1-S & 17 & NO & NO & NO  & NO   & YES & YES & YES & YES \\
ARL3-S & 4 	& NO & NO & NO & NO    & YES & YES & YES & YES \\
ARL4-S & 7 	& NO & NO & NO & NO   & YES & YES & YES & YES \\
CPA3-S & 9 	& NO & NO & NO & NO   & YES & YES & YES & YES \\
CPA4-S & 5 	& NO & NO & NO & NO  & YES & YES & YES & YES \\
ECJ0-S & 16 &NO & NO & NO & NO  & YES & YES & YES & YES \\
ECJ2-S & 5 & NO & NO & NO & NO  & YES & YES & YES & YES \\
ERM0-S & 5 & NO & NO & NO & NO  & YES & YES & YES & YES \\
\noalign{\smallskip}\hline
\end{tabular}
\end{table}

\subsection{The discordance of S-system CRNs}
\label{sec:5.2}

Limitations in the software tools available for Concordance tests in Tables \ref{tab:NRP network properties } and \ref{tab:STR network properties}  prevented results regarding this property. However, in this section, we present the discordance of $S$-system CRNs. Throughout this section, we consider both total and embedded $S$-system reaction networks and their kinetics.

\subsubsection{Concordance, weakly monotonic kinetics and non-inhibitory Power Law Kinetics}
\label{sec:5.2.1}

We recall from \cite{AJMM2015} the definition of concordance and weakly monotonic kinetics. We then complete the proof of the basic result relating both concepts to injectivity in non-inhibitory power law kinetics.

In preparation for the definition of concordance, we consider a reaction network $(\mathscr{S}, \mathscr{C}, \mathscr{R})$ with stoichiometric subspace $S \subset \mathbb{R}^\mathscr{S}$ and we let $L : \mathbb{R}^\mathscr{R} \rightarrow S$ be the linear map defined by

\begin{equation}
\label{eqn:linear map}
L\alpha = \sum _{y \rightarrow y' \in \mathscr{R}} \alpha _{y \rightarrow y'} (y' - y).
\end{equation}

Note that the real scalar multipliers $\{\alpha_{y \rightarrow y'}\}_{y \rightarrow y' \in \mathscr{R}}$ in Equation (\ref{eqn:linear map}) are permitted to be positive, negative or zero.

\begin{definition}
The reaction network $(\mathscr{S}, \mathscr{C}, \mathscr{R})$ is \textbf{concordant} if there do not exist an $\alpha \in ker L$ and a nonzero $\sigma \in S$ having the following properties:
\begin{enumerate} [i)]
\item For each $y \rightarrow y' \in \mathscr{R}$ such that $\alpha_{y \rightarrow y'} \neq 0$, supp $y$ contains a species $s$ for which sgn $\sigma_s =$ sgn $\alpha_{y \rightarrow y'}$.
\item For each $y \rightarrow y' \in \mathscr{R}$ such that $\alpha_{y \rightarrow y'}=0$, either $\sigma_s = 0$ for all $s \in$ supp $y$ or else supp $y$ contains a species $s$ and $s'$ for which sgn $\sigma_s = -$ sgn $\sigma_{s'}$, both not zero.
\end{enumerate}
A network that is not concordant is \textbf{discordant}.
\end{definition}

The set of weakly monotonic kinetics is defined as follows:
\begin{definition}
\label{def:weakly monotonic}
A kinetics $\mathscr{K}$ for a reaction network $(\mathscr{S}, \mathscr{C}, \mathscr{R})$ is \textbf{weakly monotonic} if, for each pair of compositions $c^*$ and $c^{**}$, the following implications hold for each reaction $y \rightarrow y' \in \mathscr{R}$ such that supp $y \subset$ supp $c^*$ and supp $y \subset$ supp $c^{**}$: 
\begin{enumerate} [i)]
\item $\mathscr{K}_{y \rightarrow y'} (c^{**}) > \mathscr{K}_{y \rightarrow y'} (c^*) \Rightarrow$ there is a species $s \in$ supp $y$ with $c_s^{**} > c_s^*$.
\item $\mathscr{K}_{y \rightarrow y'} (c^{**}) = \mathscr{K}_{y \rightarrow y'} (c^*) \Rightarrow$  $c_s^{**} > c_s^*$ for all $s \in$ supp $y$ or else there are species $s$, $s'\in$ supp $y$ with $c_s^{**} > c_s^*$ and $c_{s'}^{**} < c_{s'}^*$.
\end{enumerate}
\end{definition}

The final concept we need, to formulate the basic result, is that of non-inhibitory Power Law kinetics:

\begin{definition}
A power law kinetics is in $\mathcal{PL-NIK}(\mathscr{N})$ (or has \textbf{non-inhibitory kinetics}) if
\begin{enumerate}[i)]
\item the kinetic order matrix $F$ is non-negative and
\item a kinetic order $f_{r,s} > 0$ iff the species $s$ is an element of supp $\rho(r)$.
\end{enumerate}
\end{definition}

The following theorem of Shinar and Feinberg \cite{SF2012} relates concordance and injectivity in weakly monotonic and non-inhibitory power law kinetics:

\begin{theorem}
\label{thm:concordant}
For a chemical reaction network $\{\mathscr{S,C,R}\}$, the following conditions are equivalent:
\begin{enumerate}[i)]
\item The network is concordant.
\item The network is injective against all weakly monotonic kinetics.
\item The network is injective against all non-inhibitory power law kinetics.
\end{enumerate} 
\end{theorem}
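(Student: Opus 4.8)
The plan is to prove the cycle of implications $(i) \Rightarrow (ii) \Rightarrow (iii) \Rightarrow (i)$, treating the equivalence $(i) \Leftrightarrow (ii)$ as the classical concordance theorem of Shinar and Feinberg \cite{SF2012} and supplying the bridge to the non-inhibitory power law class. The linchpin is a single lemma: every kinetics in $\mathcal{PL-NIK}(\mathscr{N})$ is weakly monotonic in the sense of Definition \ref{def:weakly monotonic}. Granting this lemma, the implication $(ii) \Rightarrow (iii)$ is immediate, since injectivity against every kinetics in the larger class of weakly monotonic kinetics forces injectivity against every kinetics in the subclass $\mathcal{PL-NIK}(\mathscr{N})$.

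First I would prove the lemma by direct computation. For $\mathscr{K}_{y\to y'}(c) = k_{y\to y'}\prod_{s} c_s^{f_{y\to y',s}}$ with $f_{y\to y',s} > 0$ exactly when $s \in \supp y$, fix compositions $c^*, c^{**}$ with $\supp y \subset \supp c^*$ and $\supp y \subset \supp c^{**}$, so both rates are strictly positive. Writing the ratio $\mathscr{K}_{y\to y'}(c^{**})/\mathscr{K}_{y\to y'}(c^*) = \prod_{s\in\supp y}(c^{**}_s/c^*_s)^{f_{y\to y',s}}$, strict positivity of every exponent on $\supp y$ yields the two required implications: if the ratio exceeds $1$ then some factor exceeds $1$, giving a species $s\in\supp y$ with $c^{**}_s > c^*_s$; and if the ratio equals $1$ then the factors cannot all lie strictly on one side, forcing either equality throughout $\supp y$ or a pair $s, s'$ moving in opposite directions. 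This is the content of \emph{completing the proof} that the text promises, and it is routine.

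The substantive work is the closing implication $(iii) \Rightarrow (i)$, which I would establish by contraposition: assuming $\mathscr{N}$ is discordant, I would exhibit a single non-inhibitory power law kinetics against which it fails to be injective. Discordance supplies an $\alpha \in \ker L$ and a nonzero $\sigma \in S$ satisfying the sign conditions of concordance. The idea is to seek two compositions of the form $c^{**}_s = c^*_s\, e^{\sigma_s}$, and to choose the kinetic order matrix $F$ (non-negative, with $f_{r,s} > 0$ precisely on $\supp \rho(r)$) so that for each reaction the log-ratio $\sum_{s \in \supp y} f_{y\to y',s}\,\sigma_s$ carries the same sign as $\alpha_{y\to y'}$; the sign conditions in the concordance definition are exactly what make such a choice of exponents possible. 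One then rescales the reaction rate functions so that $\sum_{y\to y'} \alpha_{y\to y'}(y'-y) = L\alpha = 0$ reproduces equality of the two species-formation-rate vectors at $c^*$ and $c^{**}$, contradicting injectivity.

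The hard part will be this last construction: reconciling the purely sign-theoretic discordance witness with the quantitative positivity-on-support constraint that the non-inhibitory power law structure imposes on $F$. The freedom to pick the magnitudes of the kinetic orders, subject only to strict positivity on the support, is what rescues the argument, but verifying that the signs align reaction-by-reaction---in particular handling the reactions with $\alpha_{y\to y'} = 0$, where the second clause of concordance permits either a vanishing $\sigma$ on $\supp y$ or a sign reversal within it---requires care. Since the analogous construction for general weakly monotonic kinetics is already carried out in \cite{SF2012}, where possible I would reduce to that argument and only verify that the witnessing kinetics can be taken in $\mathcal{PL-NIK}(\mathscr{N})$.
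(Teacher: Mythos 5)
Your proposal is correct and follows essentially the same route as the paper: cite Shinar--Feinberg for (i)$\Rightarrow$(ii), obtain (iii)$\Rightarrow$(i) by observing that the witnessing kinetics constructed in the proof of Proposition 4.9 of \cite{SF2012} can be taken in $\mathcal{PL-NIK}(\mathscr{N})$, and close the cycle via the same lemma that every non-inhibitory power law kinetics is weakly monotonic, proved by the same positivity-of-exponents-on-support argument. No substantive difference from the paper's proof.
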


\begin{proof}
(i)$\rightarrow$(ii): This is shown in Proposition 4.8 of \cite{SF2012}. \\
(iii) $\rightarrow$ (i): Inspection of the proof of Proposition 4.9 (Appendix A of \cite{SF2012}) reveals that the kinetics constructed is actually a non-inhibitory power law kinetics, thus proving this implication.\\

We now confirm that a non-inhibitory power law kinetics is weakly monotonic, from which (ii) $\rightarrow$ (iii) follows. To show that a kinetic $\mathscr{K}$ in $\mathcal{PL-NIK} (\mathscr{N}$ fulfills (i) of Definition \ref{def:weakly monotonic}, suppose that for all species in supp $\rho(r), c^*_s \leq c^{**}_s$. Since $f_{r,s} > 0$, then $(c^*_s)^{f_{r,s}} \leq (c^{**}_s)^{f_{r,s}}$. Taking the product on both sides over all $s$ in supp $\rho (r)$, we obtain a contradiction. To show (ii), again suppose there is an $s$ with $c^*_s < c^{**}_s$. Since $f_{r,s} > 0$, then $(c^*_s)^{f_{r,s}} < (c^{**}_s)^{f_{r,s}}$ - this is where property (ii) of the $\mathcal{PL-NIK}$ definition comes in. Unless there is an $s'$ with $c^*_{s'} > c^{**}_{s'}$ and consequently $(c^*_{s'})^{f_{r,s'}} > (c^{**}_{s'})^{f_{r,s'}}$, one cannot obtain the product equality.
\end{proof}

\subsubsection{Discordance of $S$-system reaction networks}
\label{sec:5.2.2}
\begin{theorem}
\label{thm:discordant}
\begin{enumerate}[i)]
\item An $S$-system reaction network with 2 or more dependent species is discordant.
\item A non-constant $S$-system reaction network with one dependent species is discordant if $A \neq 0$ and the product of its rate constants ($\alpha$ and $\beta$) and kinetic orders ($g$ and $h$)  $\alpha\beta gh \neq 0$, otherwise concordant.
\end{enumerate}
\end{theorem}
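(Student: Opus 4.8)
The plan is to prove each part by directly exhibiting, for discordance, a vector $\alpha \in \ker L$ and a nonzero $\sigma \in S$ satisfying conditions (i)--(ii) of the concordance definition, where $L$ is the map of Equation~(\ref{eqn:linear map}); for the concordant subcase of (ii) I would instead argue that no such pair can exist. The first step, common to both parts, is to record the rigid stoichiometric structure of an $S$-system CRN. Each dependent species $X_i$ contributes exactly two reactions: the input reaction $R_i \to X_i + R_i$ with reaction vector $e_i$, and the output reaction $X_i + P_i \to P_i$ with reaction vector $-e_i$. Consequently $S = \mathbb{R}^{\mathscr{S}'}$ (each $e_i$ lies in $S$), so $\sigma$ may be taken to be any nonzero vector, and $L\alpha = \sum_i (\alpha^{\mathrm{in}}_i - \alpha^{\mathrm{out}}_i)e_i$, whence $\alpha \in \ker L$ precisely when the input and output coefficients of every species agree. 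Two further structural facts drive the construction: the reactant of every output reaction contains $X_i$, while the reactant of the input reaction is the production-regulator complex $R_i$, which is empty exactly when $X_i$ has a constant production term.

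For part (i) I would first treat the generic situation in which no species has a constant production term. Here I take $\sigma = (1,\dots,1)$ and $\alpha$ equal to $1$ on every reaction. Then $\alpha \in \ker L$, since on each pair the coefficients of $e_i$ and $-e_i$ cancel, and $\sigma \neq 0$. Because every coefficient is positive, only condition (i) is in force, and it requires merely that each reactant complex contain a species with positive $\sigma$-component; as $\sigma$ is strictly positive this reduces to each reactant complex being non-empty. Every output reactant contains its $X_i$, and every input reactant $R_i$ is non-empty by assumption, so the condition holds throughout and the network is discordant.

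The main obstacle is precisely the presence of species with constant production, whose input reaction $0 \to X_i$ has empty reactant support. Such a reaction can never carry a nonzero coefficient, because condition (i) demands a species of matching sign inside an empty support; hence its coefficient, and by the $\ker L$ coupling its partner output coefficient, must vanish, pushing the output reaction into the regime of condition (ii). The all-positive $\sigma$ fails there, since condition (ii) needs either $\sigma$ to vanish on the reactant or two species of opposite sign. I would resolve this by introducing sign variation into $\sigma$: assign $\sigma$-values in $\{-1,0,+1\}$ so that, for each zero-coefficient output reactant, either $\sigma$ vanishes on it or it meets both a positive and a negative coordinate, while keeping $\sigma \neq 0$. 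The hypothesis $m \geq 2$ is what guarantees enough coordinates to realise such an assignment, and carrying out this bookkeeping over the regulator structure is the delicate technical core of the proof.

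For part (ii), with a single dependent species, the network consists only of the production reaction and the degradation reaction $X_1 \to 0$, with vectors $e_1$ and $-e_1$. If the production term is self-regulating ($g \neq 0$) the production reactant is $\{X_1\}$, and taking equal input and output coefficients together with $\sigma_1 = \operatorname{sgn}$ of that coefficient satisfies condition (i) on both reactions, giving discordance. If the production is constant, the input reaction $0 \to X_1$ forces its coefficient, and hence both coefficients, to zero; the degradation reactant $\{X_1\}$ then falls under condition (ii), and a singleton support admits no opposite signs, forcing $\sigma_1 = 0$ and contradicting $\sigma \neq 0$, so the network is concordant. I expect the conditions $A \neq 0$ and $\alpha\beta gh \neq 0$ to enter exactly at this point, guaranteeing that both reactions are genuinely present and non-trivial, so that the dichotomy between self-regulating and constant production is the whole story.
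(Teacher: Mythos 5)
Your route is genuinely different from the paper's. The paper never touches the sign-vector definition directly: for part (i) it invokes the Shinar--Feinberg equivalence (Theorem~\ref{thm:concordant}) and constructs a $\mathcal{PL-NIK}$ kinetics whose kinetic-order matrix has $\det A=0$ with infinitely many steady states, so that the species formation rate function is non-injective; for part (ii) it analyzes injectivity of $f(x)=\alpha x^{g}-\beta x^{h}$ by calculus and again passes through Theorem~\ref{thm:concordant}. Your direct construction of $\alpha\in\ker L$ and $\sigma$ is more elementary, and in the generic situation of part (i) --- every production reactant $R_i$ nonempty --- the all-ones choice is clean, complete, and arguably tighter than the paper's own argument. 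That part of the proposal I would accept as is.

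However, there is a genuine gap where you defer to ``bookkeeping.'' The constant-production case of part (i) cannot always be resolved by re-signing $\sigma$: take two species, each with constant production and self-only degradation, so the CRN is $\{0\to X_1,\ X_1\to 0,\ 0\to X_2,\ X_2\to 0\}$. The empty input reactants force both input coefficients of $\alpha$ to vanish, the kernel condition then kills both output coefficients, and condition (ii) applied to the singleton supports $\{X_1\}$ and $\{X_2\}$ forces $\sigma=0$. This network is concordant, so no choice of $\sigma$ rescues your strategy there; the step you describe as delicate is in fact impossible without an additional hypothesis excluding constant production terms (or excluding species whose output reactant, after deleting zero-coefficient inputs, is a singleton). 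You must either add such a hypothesis or handle these degenerate configurations separately; you cannot leave this as bookkeeping.

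Part (ii) has a different problem: your sign argument shows that the network $\{X_1\to 2X_1,\ X_1\to 0\}$ is discordant whenever the production reactant is nonempty, \emph{independently} of the values of $g$, $h$, $\alpha$, $\beta$. Concordance is a property of the network alone, so a purely combinatorial argument can never reproduce the case split on $A\neq 0$ and $\alpha\beta gh\neq 0$ in the statement; your guess that these conditions merely ``guarantee that both reactions are present'' is not how they function in the paper, where they enter through the monotonicity analysis of the specific rate function. Your computation is internally correct (and in fact exposes a tension with the ``otherwise concordant'' clause when $g=h\neq 0$), but as written your proposal does not establish statement (ii) in the form given.
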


\begin{proof}
(i) To show discordance, we only need to construct a $\mathcal{PL-NIK}$ kinetics with a non-injective SFRF on the reaction network. We consider the total representation of the $S$-system first. The SFRF of any PL kinetics maybe decomposed into an $f_D$ (containing all coordinate functions for dependent species) and $f_i$ with the coordinate functions for the independent species. Since each $f_{i,j}$ has the form $f_{i,j} = a_j > 0$, iit is clear that the SFRF is injective iff $f_D$ is injective. We will construct a $\mathcal{PL-NIK}$ kinetics whose $f_D$ is multistationary, implying that it is not injective. To achieve this, we choose the kinetic order matrix as follows:

\begin{itemize}
\item The kinetics is $\mathcal{PL-NIK}$.
\item The first column of the derived $A$ matrix consists only of zeros, i.e. $g_{i1}=h_{i1}$, for all $i=1,...,m$.
\item $A \neq 0$ (this is possible since the system has 2 or more dependent species).
\end{itemize}
We have an $S$-system with det $A = 0$, which according to the above, either has no steady state or infinitely many steady states. Since the rate constants can be arbitrarily chosen from the positive real numbers, we adjust them so that we have the latter alternative. Since the network is open, this shows that $f_D$ is not injective in the single stoichiometric class.

(ii)We begin this time with $A = 0$. Since we assumed it is not constant, then $\alpha \neq \beta$. So, the kinetics $= (\alpha - \beta) X^g = ( \alpha - \beta$ exp $(g ln X)$ is strictly monotonic,and hence injective for any positive $g$. Hence the CRNs are concordant according to Theorem ~\ref{thm:concordant}. If $A$ is nonzero, which is equivalent to det $A$ nonzero, the system has a unique positive equilibrium. If one of the rate constants or the kinetic orders is 0, then we obtain a function as in the $A = 0$ case plus a constant. Hence, the system is also injective. To show that it is not injective for both rate constants and both kinetic orders nonzero, we need to look at its derivative. Since the S -system in one variable has the form $f(x)=\alpha X^g - \beta X^h$ and $f'(x)=\alpha g X^{g-1} - \beta h X^{h-1}$. One easily checks that $f'(x) < (=)(>)0$ if and only if $x^{g-h} < (=)(>) \frac{\beta h}{\alpha g}$. From Calculus, it follows that the function is strictly decreasing and strictly increasing left and right of the particular
$x$-value. The value is between its steady state at 0 and its positive steady state at $\frac{\beta}{\alpha}$, since $g > h$ (i.e. production > degradation). This means that f is not injective in the interval $(0,\frac{\beta}{\alpha})$, so that it is discordant according to the Shinar-Feinberg Theorem in the previous section. A similar argument holds for $h > g$, the function is not injective in the interval $(0, \frac{\beta}{\alpha})$. For the constant case, the same argument as in (i) applies. 

These considerations also cover the embedded representation case, since the same $f_D$ describes the embedded system, by considering the powers of the independent species as part of the rate constants.
\end{proof}


\section{The $^*-$disjoint union $\mathscr{N}^*$ of the embedded networks of $\mathscr{N}$}
\label{sec:6}
In our study of the embedded networks of the NRP and STR full network, we observed that the zero complex was the only complex common to them. From this, we abstracted the concepts of a $\mathscr{C}^*$-decomposition and $\mathscr{C}$-decomposition. In this Section, structure theorems for both decomposition types are derived in terms of linkage classes. Moreover, a new $S$-system CRN $\mathscr{N}^*$ is also constructed for which the embedded networks $\mathscr{N}_i^*$ form a $\mathscr{C}^*$-decomposition.

\subsection{The  $\mathscr{C}^*$-decomposition and the $\mathscr{C}$-decomposition}
\label{sec:6.1}
 We first formulate a formal definition of a $\mathscr{C}^*$-decomposition:

\begin{definition}
A decomposition $\mathscr{N}=\mathscr{N}_1+\mathscr{N}_2+...+\mathscr{N}_k$ with $\mathscr{N}_i=(\mathscr{S}_i,\mathscr{C}_i,\mathscr{R}_i)$ is a \textbf{$\mathscr{C}^*$-decomposition} if $\mathscr{C}_i^* \cap \mathscr{C}_j^* =\emptyset $ for $i \neq j$ where $\mathscr{C}_i^*$ and $\mathscr{C}_j^*$ are the non-zero complexes in $\mathscr{C}_i$ and $\mathscr{C}_j$, respectively. 
\end{definition}

\noindent
The embedded networks $\mathscr{N}_i^*$ of the NRP and STR models are $\mathscr{C}^*$-decompositions of their full network $\mathscr{N}$, respectively.

A particularly interesting subset of $\mathscr{C}^*$-decompositions consists of the  $\mathscr{C}$-decomposition.

\begin{definition}
A decomposition $\mathscr{N}=\mathscr{N}_1+\mathscr{N}_2+...+\mathscr{N}_k$ with $\mathscr{N}_i=(\mathscr{S}_i,\mathscr{C}_i,\mathscr{R}_i)$ is a \textbf{$\mathscr{C}$-decomposition} if $\mathscr{C}_i \cap \mathscr{C}_j =\emptyset $ for $i \neq j$.
\end{definition}

A $\mathscr{C}$-decomposition partitions not only the set of reactions but also the set of complexes. The primary example of a $\mathscr{C}$-decomposition are the linkage classes. Linkage classes, in fact, essentially determine the structure of a $\mathscr{C}$-decomposition. We first present this Structure Theorem for a $\mathscr{C}$-decomposition: 

\begin{theorem}
\textbf{(Structure Theorem for $\mathscr{C}$-decomposition)}. Let $\mathscr{L}_1,...,\mathscr{L}_l$ be the linkage classes of a network $\mathscr{N}$. A decomposition $\mathscr{N}=\mathscr{N}_1+\mathscr{N}_2+...+\mathscr{N}_k$ is a $\mathscr{C}$-decomposition if and only if each $\mathscr{N}_i$ is the union of linkage classes and each linkage class is contained in only one $\mathscr{N}_i$. In other words, the linkage class decomposition is a refinement of $\mathscr{N}$.
\end{theorem}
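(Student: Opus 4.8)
The statement is an equivalence, so the plan is to prove the two implications separately, with the forward direction carrying the real content. Throughout I would use the characterization of linkage classes as the connected components of the reaction graph (complexes as vertices, reactions as undirected edges); in particular the linkage classes have pairwise disjoint complex sets and together exhaust $\mathscr{C}$, and every reaction lies in exactly one linkage class.

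For the easy direction, suppose each $\mathscr{N}_i$ is a union of linkage classes with each $\mathscr{L}_j$ contained in exactly one $\mathscr{N}_i$. I would first note that the reaction sets $\mathscr{R}_i$ genuinely partition $\mathscr{R}$: each reaction lies in a unique linkage class, which is assigned to a unique $\mathscr{N}_i$, so the collection is a decomposition in the sense of Definition \ref{def:subnetwork}. Then the complex set $\mathscr{C}_i$ of $\mathscr{N}_i$ is the union of the complex sets of the linkage classes assigned to it; since distinct linkage classes have disjoint complex sets and land in distinct subnetworks, the $\mathscr{C}_i$ are pairwise disjoint, so the decomposition is a $\mathscr{C}$-decomposition.

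For the forward direction, assume $\mathscr{N}=\mathscr{N}_1+\cdots+\mathscr{N}_k$ is a $\mathscr{C}$-decomposition. The key observation is that the disjointness $\mathscr{C}_i\cap\mathscr{C}_j=\emptyset$ means every complex belongs to exactly one $\mathscr{C}_i$, so there is a well-defined label $\lambda(y)\in\{1,\dots,k\}$ recording the subnetwork whose complex set contains $y$. The crucial constraint is that for any reaction $y\to y'$, both endpoints lie in the complex set of the single subnetwork containing that reaction; hence $\lambda(y)=\lambda(y')$ whenever $y$ and $y'$ are joined by a reaction. I would then fix a linkage class $\mathscr{L}$, connect any two of its complexes by a path of reactions (using connectedness of $\mathscr{L}$), and propagate the label along the path to conclude that $\lambda$ is constant on the complex set of $\mathscr{L}$. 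Consequently all complexes of $\mathscr{L}$ sit in one $\mathscr{C}_i$; and since every reaction of $\mathscr{L}$ has both endpoints in that $\mathscr{C}_i$ but belongs to some $\mathscr{N}_j$ with endpoints in $\mathscr{C}_j$, disjointness forces $j=i$, so all reactions of $\mathscr{L}$ lie in the same $\mathscr{N}_i$. Thus each linkage class is contained in exactly one subnetwork, and each $\mathscr{N}_i$ is precisely the union of the linkage classes with label $i$, which is the asserted refinement.

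The main obstacle is the label-propagation step in the forward direction: one must use the disjointness of the complex sets to guarantee that $\lambda$ is single-valued on complexes, and only then can path-connectedness inside a linkage class be invoked to force $\lambda$ to be constant. Everything else — the reaction partition and the disjointness of complex sets in the easy direction — is bookkeeping that follows directly from the definition of linkage classes as connected components.
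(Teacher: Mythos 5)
Your proof is correct and takes essentially the same route as the paper: both arguments rest on the observation that two reactions sharing a complex must lie in the same subnetwork (since otherwise that complex would belong to two distinct $\mathscr{C}_i$'s), and then use connectedness of each linkage class to propagate this to all of its reactions. Your explicit label function $\lambda$ and path-propagation merely formalize what the paper states via ``adjacent reactions,'' so the two proofs are the same in substance.
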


\begin{proof}
Clearly, if the linkage classes form a refinement of $\mathscr{N}$, then $\mathscr{N}$ is a $\mathscr{C}$-decomposition. To see the converse, let $\mathscr{N}_i=(\mathscr{S}_i,\mathscr{C}_i,\mathscr{R}_i)$ and $\mathscr{L}_j=(\mathscr{S}_{\mathscr{L}_j},\mathscr{C}_{\mathscr{L}_j},\mathscr{R}_{\mathscr{L}_j})$ where $\mathscr{R}_i$ is the union (taken over $j$) of $(\mathscr{R}_i \cap \mathscr{R}_{\mathscr{L}_j}$. We only need to show that each non-empty intersection is equal to $\mathscr{R}_{\mathscr{L}_j}$, ( i.e., $\mathscr{R}_{\mathscr{L}_j}=\mathscr{R}_i \cap \mathscr{R}_{\mathscr{L}_j}$) to imply that each linkage class is contained in only one $\mathscr{N}_i$. If the linkage class $\mathscr{L}_j$ has only one reaction then $\mathscr{R}_i \cap \mathscr{R}_{\mathscr{L}_j}=\mathscr{R}_{\mathscr{L}_j}$. If the linkage class $\mathscr{L}_j$ has at least two reactions, then there is an adjacent reaction to each reaction, whose reactant complex or product complex is common with the first reaction. If this adjacent reaction belongs to a different subnetwork, then there exists a complex which is common to two different subnetworks. This would contradict that $\mathscr{N}$ partitions the set of complexes. Hence, all reactions of the linkage class lie in the intersection with $\mathscr{R}_i$.
\end{proof}

\begin{corollary}
For a $\mathscr{C}$-decomposition $\mathscr{N}=\mathscr{N}_1+\mathscr{N}_2+...+\mathscr{N}_k$, $k \leq l$. 
\end{corollary}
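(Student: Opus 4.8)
The plan is to deduce the bound directly from the Structure Theorem for $\mathscr{C}$-decompositions established immediately above. That theorem tells us that every subnetwork $\mathscr{N}_i$ of a $\mathscr{C}$-decomposition is a union of whole linkage classes and that each of the $l$ linkage classes $\mathscr{L}_1,\dots,\mathscr{L}_l$ is contained in exactly one $\mathscr{N}_i$. In other words, the linkage class decomposition is a refinement of $\{\mathscr{N}_1,\dots,\mathscr{N}_k\}$, so the family of linkage classes is partitioned into $k$ disjoint, exhaustive groups, the $i$-th group consisting of those $\mathscr{L}_j$ lying in $\mathscr{N}_i$.

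First I would record that in any decomposition the reaction set $\mathscr{R}$ is partitioned into $k$ nonempty blocks $\mathscr{R}_i$; hence each $\mathscr{N}_i$ carries at least one reaction, and that reaction belongs to some linkage class, which by the Structure Theorem must be one of the linkage classes assigned to $\mathscr{N}_i$. Consequently every group in the above partition of the linkage classes is nonempty. Then I would finish with a counting argument: writing $l_i$ for the number of linkage classes contained in $\mathscr{N}_i$, we have $l = l_1 + \dots + l_k$ with each $l_i \geq 1$, whence $l \geq k$, which is exactly the claim.

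The only point requiring any care is the nonemptiness of each $\mathscr{N}_i$, i.e. that every block of the reaction partition contains at least one linkage class; this is immediate from the convention that the blocks of a partition are nonempty, so no genuine obstacle arises---the corollary is essentially a pigeonhole consequence of the Structure Theorem.
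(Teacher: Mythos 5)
Your argument is correct and follows essentially the same route as the paper: both deduce $k \leq l$ from the Structure Theorem by observing that each subnetwork is a nonempty union of linkage classes, so the linkage classes are distributed over the $k$ subnetworks and there must be at least $k$ of them. Your version is slightly more careful than the paper's (which just splits into the cases $k=l$ and $k<l$), but the underlying pigeonhole reasoning is identical.
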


\begin{proof}
If $\mathscr{N}$ is decomposed according to linkage classes, then $\mathscr{N}_i=\mathscr{L}_i$. Thus, $k=l$. If each $\mathscr{N}_i$ is the union of linkage classes, then the number of subnetworks is less than the number of linkage classes. Hence, $k<l$.
\end{proof}

We also obtain a new characterization of the ILC property:

\begin{corollary}
A network has independent linkage classes if and only if every $\mathscr{C}$-decomposition is independent.
\end{corollary}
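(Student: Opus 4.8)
The plan is to prove both directions quickly by combining the Structure Theorem for $\mathscr{C}$-decompositions with Proposition~\ref{prop: independent coarsening} on independent coarsenings. The key observation that makes the whole argument short is twofold: the linkage class decomposition is itself a $\mathscr{C}$-decomposition (distinct linkage classes share no complexes, so their complex sets are disjoint), and, by the Structure Theorem, it is the \emph{finest} such decomposition---every $\mathscr{C}$-decomposition is a coarsening of it, since each of its subnetworks is a union of linkage classes.

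For the forward implication I would assume the network has independent linkage classes, which is precisely the statement that the linkage class decomposition is independent. Let $\mathscr{N}=\mathscr{N}_1+\cdots+\mathscr{N}_k$ be an arbitrary $\mathscr{C}$-decomposition. The Structure Theorem guarantees that each $\mathscr{N}_i$ is a union of linkage classes and that each linkage class lies in exactly one $\mathscr{N}_i$; hence this $\mathscr{C}$-decomposition is a coarsening of the linkage class decomposition. Applying Proposition~\ref{prop: independent coarsening}, any coarsening of an independent decomposition is independent, so the given $\mathscr{C}$-decomposition is independent. As the decomposition was arbitrary, every $\mathscr{C}$-decomposition is independent.

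For the converse I would simply specialize. If every $\mathscr{C}$-decomposition is independent, then in particular the linkage class decomposition---being one instance of a $\mathscr{C}$-decomposition---is independent, and that is exactly the ILC property. No further work is needed here.

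I expect essentially no obstacle in this proof; the only points requiring care are invoking the Structure Theorem correctly to see that an arbitrary $\mathscr{C}$-decomposition is a coarsening of the linkage classes, and recalling that the coarsening proposition then delivers independence for free. The one-sentence subtlety worth flagging is that the converse rests only on the elementary fact that the linkage classes form a $\mathscr{C}$-decomposition, so that direction is immediate.
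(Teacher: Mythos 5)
Your proof is correct and follows essentially the same route as the paper: both directions rest on the Structure Theorem's consequence that every $\mathscr{C}$-decomposition is a coarsening of the linkage class decomposition, and on the fact that the linkage classes themselves form a $\mathscr{C}$-decomposition. The only cosmetic difference is that you invoke Proposition~\ref{prop: independent coarsening} where the paper carries out the same direct-sum regrouping of stoichiometric subspaces by hand.
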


\begin{proof}
If a network has independent linkage classes then the stoichiometric subspace $S$ is the direct sum of the stoichiometric subspaces of the linkage classes. Grouping the summands  according to the unions of the linkage classes for the subnetworks of a $\mathscr{C}$-decomposition provides $S$ as the direct sum of the subnetworks. Hence, every $\mathscr{C}$-decomposition of a network is also independent. For the converse, since every $\mathscr{C}$-decomposition of a network is independent and the linkage class decomposition is also a $\mathscr{C}$-decomposition, it follows that the linkage classes are independent.
\end{proof}

\noindent
If a network has dependent linkage class, it may fail to have an independent $\mathscr{C}$-decomposition, as the following example shows:
\begin{example}
Consider the CRN with reactions $X_1 \rightarrow 2X_1 +X_2$ and $X_2 \rightarrow 2X_2 + X_1$, it has $\delta=1$. The only non-trivial decomposition is the linkage class decomposition, where the deficiency of the two linkage classes is 0. Clearly, the linkage class decomposition is dependent. In particular, it has no independent $\mathscr{C}$-decomposition. 
\end{example}

We now derive the Structure Theorem for $\mathscr{C}^*$-decomposition and apply this to the incidence-independence of such decomposition.

\begin{theorem}
\label{thm:STC*}
\textbf{(Structure Theorem for  $\mathscr{C}^*$-decomposition)}. Let $\mathscr{N}_1 + \mathscr{N}_2 + ... + \mathscr{N}_k$ be a $\mathscr{C}^*$-decomposition and $\mathscr{L}_0$ and $\mathscr{L}_{0,i}$ be the linkage classes of $\mathscr{N}$ and $\mathscr{N}_i$ containing the zero complex (note $\mathscr{L}_{0,i}$, is empty if $\mathscr{N}_i$ does not contain the zero complex). Then
\begin{enumerate}[i)]
\item the $\mathscr{L}_{0,i}$ form a $\mathscr{C}^*$-decomposition of $\mathscr{L}_0$
\item the (non-empty) $\mathscr{N}_i \setminus \mathscr{L}_{0,i}$ form a $\mathscr{C}$-decomposition of $\mathscr{N} \setminus \mathscr{L}_0$ 
\end{enumerate} 
\end{theorem}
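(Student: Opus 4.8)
**Proof proposal for the Structure Theorem for $\mathscr{C}^*$-decomposition.**

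The plan is to exploit the single shared complex—the zero complex—to separate the problem into two parts: the behavior of linkage classes through $0$, and the behavior of everything else. The key observation is that in a $\mathscr{C}^*$-decomposition, the only complex that any two subnetworks can share is the zero complex; all nonzero complexes are partitioned. So if I look at any linkage class of $\mathscr{N}$ that does \emph{not} contain $0$, all its complexes are nonzero and hence must lie entirely in a single $\mathscr{N}_i$ (by the same adjacency argument as in the Structure Theorem for $\mathscr{C}$-decomposition). Only linkage classes passing through $0$ can be ``split'' across several subnetworks. This dichotomy is what the two parts of the statement formalize.

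For part (i), I would first verify that $\mathscr{L}_0$, the linkage class of $\mathscr{N}$ containing $0$, is exactly the union of the $\mathscr{L}_{0,i}$. Every reaction incident to $0$ belongs to some $\mathscr{N}_i$, and within that subnetwork it sits in the component $\mathscr{L}_{0,i}$ through $0$; conversely each $\mathscr{L}_{0,i}$ is a linkage class of $\mathscr{N}_i$ through $0$, so all its complexes are connected to $0$ in $\mathscr{N}$ and hence lie in $\mathscr{L}_0$. It remains to check the defining property of a $\mathscr{C}^*$-decomposition, namely that the \emph{nonzero} complexes of distinct $\mathscr{L}_{0,i}$ are disjoint. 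But the nonzero complexes of $\mathscr{L}_{0,i}$ are a subset of the nonzero complexes $\mathscr{C}_i^*$ of $\mathscr{N}_i$, and these are disjoint for $i \neq j$ by hypothesis; intersecting with $\mathscr{L}_0$ preserves disjointness. Thus the $\mathscr{L}_{0,i}$ give a $\mathscr{C}^*$-decomposition of $\mathscr{L}_0$.

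For part (ii), I would argue that after deleting $\mathscr{L}_0$ from $\mathscr{N}$ and each $\mathscr{L}_{0,i}$ from the corresponding $\mathscr{N}_i$, what remains contains no occurrence of the zero complex, so the only way two of the pieces $\mathscr{N}_i \setminus \mathscr{L}_{0,i}$ could share a complex is through a \emph{nonzero} common complex—which is forbidden by the $\mathscr{C}^*$ hypothesis. Hence the remaining complexes are genuinely partitioned, which is precisely the $\mathscr{C}$-decomposition condition $\mathscr{C}_i \cap \mathscr{C}_j = \emptyset$. I would also confirm that these pieces together account for all reactions of $\mathscr{N} \setminus \mathscr{L}_0$: a reaction not in $\mathscr{L}_0$ has both complexes nonzero (or involves $0$ only if it lay in some $\mathscr{L}_{0,i}$, which it does not), so it survives in exactly one $\mathscr{N}_i \setminus \mathscr{L}_{0,i}$.

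The main obstacle I anticipate is the bookkeeping around linkage classes that pass through $0$ but are only ``partially'' captured by a given $\mathscr{N}_i$: I must be careful that $\mathscr{L}_{0,i}$ is defined as the linkage class of the subnetwork $\mathscr{N}_i$ through $0$, not as $\mathscr{L}_0 \cap \mathscr{N}_i$, since a reaction incident to $0$ in $\mathscr{N}$ might connect, within $\mathscr{N}_i$, to nonzero complexes that are in turn linked to $0$ only via reactions living in a \emph{different} subnetwork $\mathscr{N}_j$. The $\mathscr{C}^*$ hypothesis is exactly what rules out such cross-subnetwork linking at nonzero complexes, so the delicate point is to invoke it at the right moment—when asserting that a nonzero complex appearing in $\mathscr{L}_{0,i}$ cannot simultaneously appear in $\mathscr{L}_{0,j}$—rather than assuming the decomposition already respects linkage-class structure.
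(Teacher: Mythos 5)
Your proposal is correct and takes essentially the same route as the paper's proof: disjointness of the nonzero complexes of the $\mathscr{L}_{0,i}$ (resp.\ of the $\mathscr{N}_i \setminus \mathscr{L}_{0,i}$) is read off directly from the hypothesis $\mathscr{C}_i^* \cap \mathscr{C}_j^* = \emptyset$, while the adjacency argument at the zero complex yields $\mathscr{L}_0 = \bigcup_i \mathscr{L}_{0,i}$, which part (ii) then uses. The subtlety you flag at the end---that $\mathscr{L}_{0,i}$ is a linkage class of the subnetwork rather than a priori $\mathscr{L}_0 \cap \mathscr{N}_i$---is genuine, and your treatment of it is, if anything, more explicit than the paper's.
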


\begin{proof}
To prove (i), we need to show that each non-zero complex of $\mathscr{L}_0$ is contained in only one subnetwork $\mathscr{N}_i$. If there is only one subnetwork $\mathscr{N}_i$ containing the zero complex then we are done. If there are at least two subnetworks containing the zero complex then $\mathscr{L}_0$ has at least two non-zero complexes connected to the zero complex. Otherwise, if there would only be one complex then $\mathscr{N}_i$ is not a $\mathscr{C}^*$-decomposition of $\mathscr{N}$, a contradiction. Now, if one of these non-zero complexes belongs to different subnetworks, this would contradict that $\mathscr{N}$ partitions the non-zero complexes. Hence, all the non-zero complexes of $\mathscr{L}_0$ is contained in only one $\mathscr{N}_i$ and $\mathscr{L}_{0,1} + \mathscr{L}_{0,2} +...+ \mathscr{L}_{0,j} = \mathscr{L}_0$ for $j \leq k$.

To prove (ii), it suffices to show that the intersection of the set of complexes in $\mathscr{N}_i \setminus \mathscr{L}_0$ is empty. The set of complexes in $\mathscr{N} \setminus \mathscr{L}_0$ are all non-zero and $\mathscr{N} \setminus \mathscr{L}_0 = (\mathscr{N}_1 + ...+ \mathscr{N}_k) \setminus \mathscr{L}_0$. From (i), we have $\mathscr{L}_{0,1} + \mathscr{L}_{0,2} +...+ \mathscr{L}_{0,j} = \mathscr{L}_0$ for $j \leq k$. Thus, $\mathscr{N} \setminus \mathscr{L}_0 = \mathscr{N}_1 \setminus \mathscr{L}_{0,1} + ... + \mathscr{N}_k \setminus \mathscr{L}_{0,k}$ where $\mathscr{L}_{0,k}$ is empty if $\mathscr{N}_k$ does not contain the zero complex. Since $\mathscr{N}_i$ is a $\mathscr{C}^*$-decomposition of $\mathscr{N}$, the intersection of the set of complexes in $\mathscr{N}_i \setminus \mathscr{L}_{0,i}$ is empty.
\end{proof}

Incidence-independence is given by Equation (\ref{eqn:incidence-independence}) in Section \ref{sec:4}. For $\mathscr{C}^*$-decompositions, this equation can be transformed into a more convenient \enquote{Common Complex Criterion} as follows: 

\begin{proposition}
\label{prop:CCC}
\textbf{(Common Complex Criterion).} Let $k(0)$ be the number of subnetworks $\mathscr{N}_i$ in a network $\mathscr{N}$ containing the zero complex. A $\mathscr{C}^*$-decomposition of $\mathscr{N}$ is incidence-independent if and only if
\[\sum l_i - l = \begin{cases} 
		\mbox{0,} & \mbox{whenever } k(0)=0 \\ 
		\mbox{k(0) - 1,} & \mbox{whenever } k(0)>0 
	\end{cases} \]
\end{proposition}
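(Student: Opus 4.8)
The plan is to reduce the incidence-independence condition to a pure complex-counting identity and then evaluate that identity using only the defining property of a $\mathscr{C}^*$-decomposition. By definition, incidence-independence is Equation (\ref{eqn:incidence-independence}), namely $n - l = \sum_i (n_i - l_i)$. First I would rearrange this into the equivalent form $\sum_i l_i - l = \sum_i n_i - n$. Since the asserted formula is phrased entirely in terms of $\sum_i l_i - l$, the whole proposition reduces to computing the right-hand quantity $\sum_i n_i - n$ and showing it equals $0$ when $k(0) = 0$ and $k(0) - 1$ when $k(0) > 0$. Note that this quantity depends only on the complex counts, not on the linkage-class counts, so the biconditional will follow immediately once the value of $\sum_i n_i - n$ is pinned down unconditionally.

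Next I would carry out the complex count. Let $n^*$ denote the number of non-zero complexes of $\mathscr{N}$ and $n_i^*$ the number of non-zero complexes of $\mathscr{N}_i$. Because the reaction sets $\mathscr{R}_i$ partition $\mathscr{R}$, every complex of $\mathscr{N}$ appears in some subnetwork and vice versa; combined with the $\mathscr{C}^*$-property $\mathscr{C}_i^* \cap \mathscr{C}_j^* = \emptyset$ for $i \neq j$, this shows the non-zero complexes of the subnetworks genuinely partition the non-zero complexes of $\mathscr{N}$, whence $\sum_i n_i^* = n^*$. Writing $\varepsilon_i = 1$ if $\mathscr{N}_i$ contains the zero complex and $\varepsilon_i = 0$ otherwise, we have $n_i = n_i^* + \varepsilon_i$, and summing gives $\sum_i n_i = n^* + \sum_i \varepsilon_i = n^* + k(0)$, since $\sum_i \varepsilon_i = k(0)$ by the definition of $k(0)$.

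Finally I would determine $n$ itself. The zero complex is a complex of $\mathscr{N}$ precisely when it occurs in some reaction of $\mathscr{N}$, i.e. in some $\mathscr{N}_i$, which happens if and only if $k(0) > 0$. Hence $n = n^*$ when $k(0) = 0$ and $n = n^* + 1$ when $k(0) > 0$. Subtracting from the count of the previous paragraph yields
\[
\sum_i n_i - n = \begin{cases} n^* + 0 - n^* = 0, & k(0) = 0, \\ n^* + k(0) - (n^* + 1) = k(0) - 1, & k(0) > 0, \end{cases}
\]
and substituting into the rearranged incidence-independence identity $\sum_i l_i - l = \sum_i n_i - n$ establishes both directions of the equivalence at once. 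The argument has no deep obstacle; the only point requiring care is the bookkeeping of the zero complex—specifically tracking whether it belongs to $\mathscr{N}$ and to each $\mathscr{N}_i$—which is exactly what the parameter $k(0)$ encodes. The $\mathscr{C}^*$-property disposes of all non-zero complexes at a stroke via $\sum_i n_i^* = n^*$, so no appeal to the Structure Theorem (Theorem \ref{thm:STC*}) is needed for this counting.
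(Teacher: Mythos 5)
Your proof is correct and follows essentially the same route as the paper's: both reduce incidence-independence to the counting identity $\sum_i l_i - l = \sum_i n_i - n$ and evaluate the right-hand side by splitting off the zero complex, using disjointness of the non-zero complexes to get $\sum_i n_i^* = n^*$. Your version is slightly more explicit about why the non-zero complexes of the subnetworks exhaust those of $\mathscr{N}$ and in tracking the zero complex with the indicators $\varepsilon_i$, but the substance is identical.
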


\begin{proof}
We recall Equation (\ref{eqn:incidence-independence}): 
$$n-l = \sum (n_i -l_i)$$ where $n_i$ and $l_i$ are number of complexes and linkage classes in the subnetwork $\mathscr{N}_i$. Obviously, if the network $\mathscr{N}$ does not contain the zero complex then $k(0)=0$. Since $\mathscr{N}$ is a $\mathscr{C}^*$-decomposition, $n = \sum n_i$. Thus,  Equation (\ref{eqn:incidence-independence}) is reduced to $\sum l_i -l =0.$ 

On the other hand, if $k(0)$ subnetworks contain the zero complex then, $n=n^*+1$ where $n^*$ is the number of non-zero complexes in $\mathscr{N}$. Also, $n_i = n_i^* +1$ where $n_i^*$ is the number of non-zero complexes in $\mathscr{N}_i$. Thus, Equation (\ref{eqn:incidence-independence}) becomes
$$n^*+1-l=\sum n_i^* +k(0) - \sum l_i.$$
Since $n^*=\sum n^*_i$, we obtain $\sum l_i - l = k(0)-1.$
\end{proof}

Using the parameter $k(0)$, we obtain two sufficient conditions for incidence-independence of $\mathscr{C}^*$-decomposition.

\begin{proposition}
\label{prop:k(0) for incidence-independence}
Let $k(0)$ be the number of subnetworks of a $\mathscr{C}^*$-decomposition containing the zero complex.
\begin{enumerate} [i)]
\item If $k(0)=0$ or $1$, then the decomposition is a $\mathscr{C}$-decomposition, and hence incidence-independent.
\item If $k(0)=k$, i.e. all subnetworks contain the zero complex, then the $\mathscr{C}^*$-decomposition is incidence-independent.
\end{enumerate}
\end{proposition}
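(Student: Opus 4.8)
The plan is to reduce both statements to the Common Complex Criterion (Proposition~\ref{prop:CCC}), which characterizes incidence-independence of a $\mathscr{C}^*$-decomposition purely through the linkage-class count $\sum l_i - l$. For part (i) I would first show that $k(0)\le 1$ forces the decomposition to be a genuine $\mathscr{C}$-decomposition. Since a $\mathscr{C}^*$-decomposition already partitions the non-zero complexes, the only complex that could be shared between two subnetworks is the zero complex. If $k(0)=0$ the zero complex lies in no subnetwork (equivalently $\mathscr{N}$ has no zero complex), and if $k(0)=1$ it lies in exactly one subnetwork; in either case no complex is shared, so the full set of complexes is partitioned and the decomposition is a $\mathscr{C}$-decomposition.

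Then I would invoke that every $\mathscr{C}$-decomposition is incidence-independent. By the Structure Theorem for $\mathscr{C}$-decompositions, each $\mathscr{N}_i$ is a union of linkage classes and each linkage class sits in exactly one $\mathscr{N}_i$, so the linkage classes are partitioned; hence $l=\sum l_i$ and $n=\sum n_i$, which gives $n-l=\sum(n_i-l_i)$, i.e.\ Equation~(\ref{eqn:incidence-independence}). Equivalently $\sum l_i - l = 0$, which is exactly what the Common Complex Criterion demands both when $k(0)=0$ and when $k(0)=1$ (in the latter case $k(0)-1=0$).

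For part (ii), with $k(0)=k$ every subnetwork contains the zero complex, so I would target the criterion's requirement $\sum l_i - l = k-1$. The key is a linkage-class count, for which I would use the Structure Theorem for $\mathscr{C}^*$-decompositions (Theorem~\ref{thm:STC*}): the classes $\mathscr{L}_{0,i}$ containing the zero complex assemble into the single linkage class $\mathscr{L}_0$ of $\mathscr{N}$, while the complements $\mathscr{N}_i\setminus\mathscr{L}_{0,i}$ form a $\mathscr{C}$-decomposition of $\mathscr{N}\setminus\mathscr{L}_0$ and therefore partition the remaining non-zero linkage classes. Removing $\mathscr{L}_{0,i}$ from $\mathscr{N}_i$ leaves $l_i-1$ linkage classes, and removing $\mathscr{L}_0$ from $\mathscr{N}$ leaves $l-1$; since the remainder is a $\mathscr{C}$-decomposition these counts match, i.e.\ $l-1=\sum_i(l_i-1)=\sum_i l_i - k$, yielding $\sum l_i - l = k-1$. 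Applying Proposition~\ref{prop:CCC} with $k(0)=k$ then delivers incidence-independence.

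The main obstacle is the linkage-class bookkeeping in part (ii): I must argue that the only merging of linkage classes upon assembling the $\mathscr{N}_i$ into $\mathscr{N}$ happens at the zero complex, so that the $k$ classes $\mathscr{L}_{0,i}$ collapse into one and no other merging occurs because the $\mathscr{C}^*$ property keeps all non-zero complexes separated. Theorem~\ref{thm:STC*} supplies precisely this structural fact, so the remaining work is careful counting rather than a new idea. A minor point to verify is that deleting the zero-containing linkage class from a subnetwork does not disturb its other linkage classes, which holds because linkage classes are the connected components of the reaction graph.
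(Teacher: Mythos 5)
Your proposal is correct and follows essentially the same route as the paper: part (i) observes that $k(0)\le 1$ leaves no shared complex so the decomposition is a $\mathscr{C}$-decomposition, and part (ii) combines the Structure Theorem for $\mathscr{C}^*$-decompositions with the count $\sum(l_i-1)-(l-1)=0$ to verify the Common Complex Criterion $\sum l_i - l = k-1$. Your version is in fact slightly more explicit than the paper's about why a $\mathscr{C}$-decomposition is incidence-independent (via $n=\sum n_i$ and $l=\sum l_i$ from the $\mathscr{C}$-decomposition Structure Theorem), but the underlying argument is identical.
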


\begin{proof}
$k(0)=0$ means that the network does not contain the zero complex, by definition, the decomposition is a $\mathscr{C}$-decomposition. This is also the case when the zero complex is in only one subnetwork in a $\mathscr{C}^*$-decomposition. The incidence-independence also follows from the Common Complex Criterion (CCC).

From the Structure Theorem of $\mathscr{C}^*$-decomposition (Theorem \ref{thm:STC*}), there are precisely $k$ subnetwork linkage classes in $\mathscr{L}_0$. Since the subnetworks $\mathscr{N}_i \setminus \mathscr{L}_{0,i}$ form a $\mathscr{C}$-decomposition of $\mathscr{N} \setminus \mathscr{L}_0$, and there is no zero complex in $\mathscr{N} \setminus \mathscr{L}_0$, $\sum (l_i -1) - (l-1)=0$ or $\sum l_i - l = k-1$, which is the CCC condition for incidence-independence of the $\mathscr{C}^*$-decomposition of the whole network.
\end{proof}

\subsection{The new $S$-system CRN $\mathscr{N}^*$}

As mentioned in Section \ref{sec:3}, the species set partition of the NRP and STR full models $\mathscr{N}$ generates embedded networks $\mathscr{N}^*_i$. But these $\mathscr{N}^*_i$'s does not partition the reaction set in $\mathscr{N}$. This led us in constructing a new $S$-system $\mathscr{N}^*$. 

\begin{definition}
Let $\mathscr{N}_i^*=(\mathscr{S}_i, \mathscr{C}|_{\mathscr{S}_i}, \mathscr{R}|_{\mathscr{S}_i})$, $i=1,...,k$, be the embedded networks induced by the species and reaction set partitions. The $^*$-disjoint union $\mathscr{N}^*$ of the embedded networks is the $S$-system CRN given by $(\mathscr{S}_i \cup \mathscr{C}|_{\mathscr{S}_i} \cup \mathscr{R}|_{\mathscr{S}_i})$.
\end{definition}

The adjective \enquote{$^*$-disjoint} stems from the fact that if the zero complex is present in the network, the union of all complexes may not be disjoint. However if $\mathscr{C}^*$ denotes the set of non-zero complexes, then their union is disjoint. The partitions of the species set and the reaction set always hold.

\begin{table}[h!]
\footnotesize
\begin{center}
\caption{NRP and STR Network numbers for $\mathscr{N}^*$ and the embedded networks $\mathscr{N}_i^*$}
\label{tab:NRP and STR network numbers N* }       
\begin{tabular}{ccccc|cccc}
\hline\noalign{\smallskip}
\multirow{2}{*}{Network numbers} &
      \multicolumn{4}{c}{NRP} & 
      \multicolumn{4}{c}{STR} \\
    & $\mathscr{N}^*$ & $\mathscr{N}_1^*$ & $\mathscr{N}_2^*$ & $\mathscr{N}_3^*$ &	 $\mathscr{N}^*$ & $\mathscr{N}_1^*$ & $\mathscr{N}_2^*$ & $\mathscr{N}_3^*$ \\     
\noalign{\smallskip}\hline\noalign{\smallskip}
$m$ 			& 40 & 8  & 17 & 15	& 37 & 7  & 14 & 16 \\
$m_{rev}$		& 0  & 0  &  0 &  0 & 2  & 0  & 1  & 1 \\
$n$				& 95 & 19 & 36 & 42 & 90 & 19 & 29 & 44\\
$n_r$			& 58 & 10 & 20 & 28 & 57 & 12 & 17 & 29\\
$r$				& 80 & 16 & 34 & 30 & 74 & 14 & 28 & 32\\
$l$				& 16 & 3  &  3 & 12 & 18 & 5  &  2 & 13 \\
$sl$			& 95 & 19 & 36 & 42 & 90 & 19 & 28 & 43\\
$t$				& 37 & 9  & 16 & 14 & 33 & 7  & 12 & 15\\
$s$				& 40 & 8  & 17 & 15 & 37 & 7  & 14 & 16\\
$q$				& 40 & 8  & 17 & 15 & 37 & 7  & 14 & 16\\
$\delta$		& 39 & 8  & 16 & 15 & 35 & 7  & 13 & 15\\
$\delta_p$		& 18 & 2  & 3  & 13 & 20 & 5  & 3  & 13\\
\noalign{\smallskip}\hline
\end{tabular}
\end{center}
\end{table}

The basic result on $\mathscr{N}^*$ is:
\begin{proposition}
\label{prop:independent decomposition of N*}
The embedded networks $\mathscr{N}^*_i$ of $\mathscr{N}$ constitute an independent decomposition of $\mathscr{N}^*$. 
\end{proposition}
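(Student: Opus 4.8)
The plan is to establish the two defining requirements of an independent decomposition in turn: first that the embedded networks $\mathscr{N}^*_i$ genuinely form a decomposition of $\mathscr{N}^*$ in the sense of Definition \ref{def:subnetwork}, and then that this decomposition is independent, i.e.\ that the stoichiometric subspace $S^*$ of $\mathscr{N}^*$ is the direct sum of the stoichiometric subspaces $S_i$ of the $\mathscr{N}^*_i$. The first requirement is immediate from the construction of the $^*$-disjoint union: by definition $\mathscr{N}^*$ has species set $\bigcup_i \mathscr{S}_i$ and reaction set $\bigcup_i \mathscr{R}|_{\mathscr{S}_i}$, and (as noted in the remark following the definition of $\mathscr{N}^*$) the reaction set partition always holds. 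Hence $\{\mathscr{R}|_{\mathscr{S}_i}\}$ partitions $\mathscr{R}(\mathscr{N}^*)$ and the $\mathscr{N}^*_i$ form a decomposition.

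For independence I would first recall the general fact that for \emph{any} decomposition one has $S^* = \sum_i S_i$: the stoichiometric subspace $S^*$ is spanned by the reaction vectors $y' - y$ ranging over $\mathscr{R}(\mathscr{N}^*) = \bigsqcup_i \mathscr{R}|_{\mathscr{S}_i}$, and since each such reaction lies in exactly one block, the span splits as $S^* = \spn\{y'-y\} = \sum_i \spn\{y'-y : \text{reactions in } \mathscr{R}|_{\mathscr{S}_i}\} = \sum_i S_i$. Independence is then precisely the statement that this sum is \emph{direct}, equivalently that $s^* = \sum_i s_i$ (no strict inequality $s^* < \sum s_i$ occurs).

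The crux of the argument is to exploit the fact that the species sets $\mathscr{S}_i$ form a partition of $\mathscr{S}$, so that $\mathbb{R}^{\mathscr{S}} = \bigoplus_i \mathbb{R}^{\mathscr{S}_i}$ as a decomposition into complementary coordinate blocks. Because each embedded network $\mathscr{N}^*_i$ involves only species in $\mathscr{S}_i$, every reaction vector of $\mathscr{N}^*_i$, viewed inside $\mathbb{R}^{\mathscr{S}}$, is supported on the coordinates of $\mathscr{S}_i$; hence $S_i \subseteq \mathbb{R}^{\mathscr{S}_i}$. Subspaces lying in disjoint coordinate blocks sum directly: if $\sum_i x_i = 0$ with $x_i \in S_i \subseteq \mathbb{R}^{\mathscr{S}_i}$, the independence of the coordinate blocks forces each $x_i = 0$. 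Combining this with $S^* = \sum_i S_i$ gives $S^* = \bigoplus_i S_i$, which is exactly independence.

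I do not expect a genuine obstacle here, since disjointness of the species supports makes directness automatic; the only care needed is the bookkeeping point that the reaction vectors of $\mathscr{N}^*$ really coincide with the (disjointly supported) reaction vectors of the $\mathscr{N}^*_i$ under the construction of the $^*$-disjoint union. As a consistency check one can read off the equality $s^* = \sum_i s_i$ directly from Table \ref{tab:NRP and STR network numbers N* } (for NRP, $40 = 8 + 17 + 15$, and similarly for STR, $37 = 7 + 14 + 16$), confirming $s^* = \sum_i s_i$ and hence independence.
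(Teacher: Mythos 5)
Your proposal is correct, and it is worth noting that it supplies a justification the paper leaves implicit. The paper's own proof is a one-line rank count: it observes that the $\mathscr{N}^*_i$ are the subnetworks of $\mathscr{N}^*$ induced by the reaction set partition and then asserts that their ranks sum to the rank of $\mathscr{N}^*$, which for these fully open $S$-system CRNs follows from $s = m$, $s_i = m_i$ and $\sum m_i = m$ (and is visible in Table \ref{tab:NRP and STR network numbers N* }). You instead prove directness structurally: since the species sets $\mathscr{S}_i$ partition $\mathscr{S}$, each stoichiometric subspace $S_i$ lives in the coordinate block $\mathbb{R}^{\mathscr{S}_i}$ of $\mathbb{R}^{\mathscr{S}} = \bigoplus_i \mathbb{R}^{\mathscr{S}_i}$, and subspaces supported on disjoint coordinate blocks automatically sum directly. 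This buys you something: your argument does not depend on the networks being fully open or on knowing the individual ranks, so it establishes independence for the $^*$-disjoint union of \emph{any} embedded networks arising from a species partition, whereas the paper's rank-counting step is really only verified for the $S$-system case at hand. Your preliminary bookkeeping (that the $\mathscr{R}|_{\mathscr{S}_i}$ are pairwise disjoint, so one genuinely has a decomposition) is also handled correctly by appeal to the remark following the definition of $\mathscr{N}^*$.
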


\begin{proof}
The embedded networks $\mathscr{N}_i^*$ are subnetworks of $\mathscr{N}^*$ induced by the partition of the reaction set. Since their ranks sum up to the rank of $\mathscr{N}^*$, the decomposition is independent. 
\end{proof}

It follows from Proposition 3.3 in \cite{FLRM2017} that the deficiency of $\mathscr{N}^*$ is less than or equal to the sum of the embedded network deficiencies. It also follows from Feinberg's Decomposition Theorem, that the positive equilibria of $\mathscr{N}^*$ consists of the elements in the intersection of the positive equilibria sets of the embedded networks.

\begin{example}
The embedded networks $\mathscr{N}_1^*, \mathscr{N}_2^*$ and $\mathscr{N}_3^*$ of the NRP and STR full model constitute an independent decomposition of $\mathscr{N}^*$ which can be verified from Table \ref{tab:NRP and STR network numbers N* }. Moreover, these embedded networks form a $\mathscr{C}^*$-decomposition of $\mathscr{N}^*$ with $k(0)=k=3$. Hence, it follows from Proposition \ref{prop:k(0) for incidence-independence} that $\mathscr{N}^*$ is incidence-independent. Thus, the $\mathscr{N}_i^*$'s of the NRP and STR full model is a bi-independent $\mathscr{C}^*$-decomposition of $\mathscr{N}^*$. (This is summarized in Table \ref{tab:network decomposition of N and N* }.)
\end{example}


\section{The species subsets-induced decomposition of $\mathscr{N}$ and modularity}
\label{sec:7}

In this section, we consider $S$-systems, which, like the \textit{Mtb} NRP and STR models are derived from connected digraphs, with no independent variables and with a vertex partition defining subdigraphs.

\subsection{The species subsets-induced decomposition of $\mathscr{N}$}
An $S$-system species set partition determines not only a set of embedded networks, but also a coarsening of the $S$-system's species decomposition. Returning to the full network $\mathscr{N}$, the reaction set partition also induces a decomposition into $k$ subnetworks $\mathscr{N}_1,..., \mathscr{N}_k$. These subnetworks are not CRNs of $S$-systems since they have species which are neither dependent nor independent variables. However, they are quite similar, and, in particular $\mathscr{N}_i$ has rank $m_i$. Hence, we have the next basic fact:

\begin{proposition}
The subnetworks $\mathscr{N}_i$ constitute an independent decomposition of $\mathscr{N}$. It is incidence-independent if the species decomposition is incidence-independent.
\end{proposition}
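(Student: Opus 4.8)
The plan is to establish two separate claims: first, that the subnetworks $\mathscr{N}_i$ form an independent decomposition of $\mathscr{N}$, and second, that this decomposition is incidence-independent whenever the species decomposition is. For the first claim, I would exploit the fact that the $S$-system species set partition $\{\mathscr{S}_1,\dots,\mathscr{S}_k\}$ induces, via the associated reaction pairs, a partition of the full reaction set $\mathscr{R}$: each species $X_i$ contributes its inflow/outflow reaction pair to exactly one block, determined by which $\mathscr{S}_j$ contains $X_i$. This is precisely the reaction set partition described in the Corollary following the main deficiency Theorem, so the $\{\mathscr{R}_i\}$ genuinely form a decomposition in the sense of Definition~\ref{def:subnetwork}.

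The key observation for independence is that the reaction set partition of $\mathscr{N}$ is a \emph{coarsening} of the $S$-system's species decomposition: each block $\mathscr{R}_i$ is a union of the singleton-species reaction pairs $\mathscr{R}'_j$ for those $j$ with $X_j \in \mathscr{S}_i$. Since the main Theorem already established that the species decomposition of an $S$-system is independent (its stoichiometric subspaces are spanned by the distinct coordinate directions $\{X_j\}$ and sum directly to the full rank-$m$ space), I would simply invoke Proposition~\ref{prop: independent coarsening}: any coarsening of an independent decomposition is independent. This gives independence of $\{\mathscr{N}_i\}$ almost immediately, and the remark that $\mathscr{N}_i$ has rank $m_i = |\mathscr{S}_i|$ then follows since the stoichiometric subspaces of the species blocks are spanned by disjoint coordinate directions.

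For the second claim, the strategy is identical but applied to the incidence map rather than the stoichiometric subspace. The subnetworks $\mathscr{N}_i$ are again a coarsening of the species decomposition, so if the species decomposition is incidence-independent, then Proposition~\ref{prop:incidence-independent coarsening} delivers incidence-independence of $\{\mathscr{N}_i\}$ directly. The logical skeleton of the whole proof is thus: (i) verify the species-set partition coarsens the species decomposition, (ii) apply the independence-coarsening proposition, (iii) apply the incidence-independence-coarsening proposition.

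The main obstacle I anticipate is not the coarsening argument itself, which is formal, but rather being careful about what happens to complexes when one passes from the species decomposition to the coarser subnetworks $\mathscr{N}_i$. Unlike the embedded networks $\mathscr{N}_i^*$, the $\mathscr{N}_i$ are ordinary subnetworks obtained by restricting only the reaction set (no species are deleted), so the complexes of $\mathscr{N}_i$ are literally the reactant and product complexes of the reactions in $\mathscr{R}_i$, with no collapsing or merging. I would want to confirm that no spurious coincidences of complexes across blocks undermine the incidence-map bookkeeping; but since the coarsening propositions are stated purely in terms of the direct-sum structure of the relevant images and make no reference to the complex set, the cleanest route is to avoid complex-counting entirely and let Propositions~\ref{prop: independent coarsening} and~\ref{prop:incidence-independent coarsening} do all the work. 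The only genuine content to check is that the species partition really does coarsen the species decomposition, which amounts to the elementary fact that each species lies in exactly one block $\mathscr{S}_i$.
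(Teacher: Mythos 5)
Your proposal is correct and follows essentially the same route as the paper: the paper likewise observes that each $\mathscr{N}_i$ has rank $m_i$ so that the ranks sum to $m$ (giving independence), and it proves the second claim exactly as you do, by noting that $\{\mathscr{N}_i\}$ is a coarsening of the species decomposition and invoking Proposition~\ref{prop:incidence-independent coarsening}. Your only (harmless) variation is to derive independence too via the coarsening argument through Proposition~\ref{prop: independent coarsening} rather than by summing ranks directly, which rests on the same underlying fact that the species decomposition's stoichiometric subspaces are the coordinate lines $\spn\{X_j\}$.
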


\begin{proof}
The proof for the independence of $\mathscr{N}_i$ is similar to Proposition \ref{prop:independent decomposition of N*}. Since the subnetworks $\mathscr{N}_i$ is a coarsening of the species decomposition, the second claim follows from Proposition \ref{prop:incidence-independent coarsening}.
\end{proof}

\begin{remark}
The converse of the incidence-independent claim does not follow. It has been shown in Example \ref{example:species decomposition} that the species decomposition of NRP $\mathscr{N}$ is incidence-independent but Table \ref{tab:NRP and STR network numbers N } shows that the network decomposition of $\mathscr{N}=\mathscr{N}_1 + \mathscr{N}_2 + \mathscr{N}_3$ is incidence-independent (i.e., $n-l = \sum n_i - l_i$).
\end{remark}

We first relate the decomposition of $\mathscr{N}$ with that of $\mathscr{N}^*$. The relation of $\mathscr{N}$ and $\mathscr{N}^*$ is based on the observation that for each $i$ there is a digraph homomorphism from $\mathscr{N}_i$ to $\mathscr{N}_I^*$. A digraph homomorphism maps the vertices while preserving adjacency. For a complex in $\mathscr{N}_i$ its image is its \enquote{projection} to $\mathscr{N}_i^*$, i.e. the terms with common species are left out. 

\begin{proposition}
The map $\phi:\mathscr{N}_i \to \mathscr{N}_i^*$ is a digraph homomorphism. Each such map is surjective, and if injective, is an isomorphism of the subdigraphs.
\end{proposition}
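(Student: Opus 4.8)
The plan is to treat each complex of $\mathscr{N}_i$ as a vector in $\mathbb{Z}_{\geq 0}^{\mathscr{S}}$ and to identify $\phi$ with the coordinate projection onto $\mathscr{S}_i$ (equivalently, the deletion from each complex of all species outside $\mathscr{S}_i$). With this description the argument splits into three checks: that $\phi$ respects the digraph structure, that it is surjective, and that injectivity upgrades it to an isomorphism. First I would record that $\phi$ sends complexes of $\mathscr{N}_i$ to complexes of $\mathscr{N}_i^*$; this is immediate, since by definition the complexes of the embedded network $\mathscr{N}_i^*$ are exactly the restrictions to $\mathscr{S}_i$ of the reactant and product complexes of the reactions of $\mathscr{N}_i$.

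The heart of the homomorphism claim is arc preservation, and this is where the only genuine obstacle lies: a priori a reaction $y \to y'$ of $\mathscr{N}_i$ could collapse to a trivial reaction $\phi(y) \to \phi(y)$ under restriction, and such trivial reactions are discarded when forming $\mathscr{N}_i^*$, so no arc would survive. I would rule this out as follows. Every reaction of $\mathscr{N}_i$ belongs to a pair $\{R_j \to X_j + R_j,\ X_j + P_j \to P_j\}$ associated with some species $X_j$, and $\mathscr{N}_i$ collects precisely those pairs with $X_j \in \mathscr{S}_i$. The reaction vector $y' - y$ of either reaction in such a pair equals $+X_j$ or $-X_j$, and this vector involves only the species $X_j \in \mathscr{S}_i$. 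Hence $\phi$ leaves the reaction vector unchanged, so $\phi(y') - \phi(y) = \pm X_j \neq 0$ and $\phi(y) \neq \phi(y')$. Thus the restricted reaction is nontrivial and coincides with the corresponding reaction of $\mathscr{N}_i^*$, showing $\phi(y) \to \phi(y')$ is an arc of $\mathscr{N}_i^*$. This establishes that $\phi$ is a digraph homomorphism.

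Surjectivity is then essentially definitional: every complex of $\mathscr{N}_i^*$ is the $\mathscr{S}_i$-restriction of a complex of $\mathscr{N}_i$, and by the previous paragraph every reaction of $\mathscr{N}_i^*$ is the image of the corresponding reaction of $\mathscr{N}_i$, so $\phi$ is onto both vertices and arcs. Finally, for the isomorphism claim I would assume $\phi$ injective on complexes. Injectivity together with surjectivity makes $\phi$ a bijection on vertices; moreover, if two reactions $y_1 \to y_1'$ and $y_2 \to y_2'$ of $\mathscr{N}_i$ shared an image, then $\phi(y_1) = \phi(y_2)$ and $\phi(y_1') = \phi(y_2')$ would force $y_1 = y_2$ and $y_1' = y_2'$, so $\phi$ is also a bijection on arcs. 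The inverse assignment then carries each arc of $\mathscr{N}_i^*$ back to its unique preimage arc of $\mathscr{N}_i$ and is itself a digraph homomorphism, so $\phi$ is an isomorphism of the subdigraphs. I expect the trivial-reaction obstruction in the second paragraph to be the only subtle point; the remaining steps are bookkeeping.
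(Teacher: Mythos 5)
Your proof is correct and follows essentially the same route as the paper: $\phi$ is the restriction (projection) of complexes to $\mathscr{S}_i$, which is surjective by construction and a digraph homomorphism. In fact your write-up is more careful than the paper's, which simply asserts the homomorphism property; your observation that no reaction of $\mathscr{N}_i$ can collapse to a trivial reaction --- because each reaction vector is $\pm X_j$ with $X_j \in \mathscr{S}_i$ and hence survives the restriction --- closes the one genuine gap in the argument, and you also supply the (omitted in the paper) verification that injectivity yields an isomorphism.
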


\begin{proof}
The construction of $i^{th}$ embedded network involves removing species not in the species set $\mathscr{S}_i$ from the complexes of $\mathscr{N}_i$. Hence there is a surjective map $\phi: \mathscr{R}_i \rightarrow \mathscr{R}^*_i.$ Since any complex is a reactant or product of a reaction, one obtains a map $\phi: \mathscr{C}_i \rightarrow \mathscr{C}^*_i$ too. This map is clearly a digraph homomorphism.
\end{proof}


If the decomposition of $\mathscr{N}$ is a $\mathscr{C}^*$-decomposition, then the subdigraph homomorphism extend to a digraph homomorphism from $\mathscr{N}$ to $\mathscr{N}^*$. If, in addition, all the subdigraph maps are injective, the map $\mathscr{N}$ to $\mathscr{N}^*$ is a digraph isomorphism. 

\begin{example}
The subnetworks $\mathscr{N}_i$ is surjective digraph homomorphic to the embedded networks $\mathscr{N}_i^*$ in both the NRP and STR models. Moreover, the subnetwork $\mathscr{N}_2$ ($\mathscr{N}_1$) is injective to the embedded network $\mathscr{N}_2^*$ ($\mathscr{N}_1^*$) in NRP (STR).
\end{example}

\begin{table}[h!]
\footnotesize
\begin{center}
\caption{NRP and STR Network numbers for $\mathscr{N}$ and its subnetworks $\mathscr{N}_i$}
\label{tab:NRP and STR network numbers N }       
\begin{tabular}{ccccc|cccc}
\hline\noalign{\smallskip}
\multirow{2}{*}{Network numbers} &
      \multicolumn{4}{c}{NRP} & 
      \multicolumn{4}{c}{STR} \\
    & $\mathscr{N}$ & $\mathscr{N}_1$ & $\mathscr{N}_2$ & $\mathscr{N}_3$ &	 $\mathscr{N}$ & $\mathscr{N}_1$ & $\mathscr{N}_2$ & $\mathscr{N}_3$ \\     
\noalign{\smallskip}\hline\noalign{\smallskip}
$m$ 			& 40 & 11 & 18 & 17	& 37 & 9  & 19 & 19 \\
$m_{rev}$		& 0  & 0  &  0 &  0 & 1  & 0  & 1  & 0 \\
$n$				& 98 & 21 & 36 & 43 & 95 & 19 & 33 & 46\\
$n_r$			& 60 & 12 & 20 & 28 & 60 & 12 & 19 & 29\\
$r$				& 80 & 16 & 34 & 30 & 74 & 14 & 28 & 32\\
$l$				& 19 & 5  &  3 & 13 & 22 & 5  &  6 & 14 \\
$sl$			& 98 & 21 & 36 & 43 & 94 & 19 & 32 & 46\\
$t$				& 38 & 9  & 16 & 15 & 35 & 7  & 13 & 17\\
$s$				& 40 & 8  & 17 & 15 & 37 & 7  & 14 & 16\\
$q$				& 40 & 11 & 18 & 17 & 37 & 9  & 19 & 19\\
$\delta$		& 39 & 8  & 16 & 15 & 36 & 7  & 13 & 16\\
$\delta_p$		& 20 & 1  & 2  & 11 & 23 & 3  & 0  & 10\\
\noalign{\smallskip}\hline
\end{tabular}
\end{center}
\end{table}
 
 We now compare the network numbers of $\mathscr{N}$ and $\mathscr{N}^*$ for both the NRP and STR models. Some of the network numbers of $\mathscr{N}^*$ and $\mathscr{N}$ in Tables \ref{tab:NRP and STR network numbers N* } and \ref{tab:NRP and STR network numbers N }, respectively, are equal. In particular, their number of reactions $(r)$ are equal however, their set of reactions are not the same. The reactions in $\mathscr{N}^*$ that differ from $\mathscr{N}$ is indicated by a reaction $Ri^*$ in the Supplementary Information. Further, there are 3 more complexes in $\mathscr{N}$ than in $\mathscr{N}^*$. This is due to the construction of $\mathscr{N}^*$ which is from the union of the embedded networks $\mathscr{N}_i^*$ that involves removal of species not in the species set $\mathscr{S}_i$ from the complexes of $\mathscr{N}_i^*$ which leads to coincidence of complexes. Thus, this reduced number of complexes in $\mathscr{N}^*$ also reduces the other network numbers: $n_r, l, sl$ and $t$ in $\mathscr{N}^*$. 

Meanwhile, the network numbers of $\mathscr{N}_i^*$ and $\mathscr{N}_i$ have also similarities and differences. Their $r$, $s$ and $\delta$ are equal. There are also more species in $\mathscr{N}_i$ than in $\mathscr{N}_i^*$ because of an addition of \enquote{foreign species} (those not in the subsystem).

In terms of the network properties, both $\mathscr{N}^*$ and $\mathscr{N}$ with their subnetworks $\mathscr{N}^*_i$ and $\mathscr{N}_i$, respectively, all have the same network properties (this properties are similar to those presented in Tables \ref{tab:NRP network properties } and \ref{tab:STR network properties}).

\begin{table}[h!]
\footnotesize
\begin{center}
\caption{Description of the network decomposition of $\mathscr{N}^*$ and $\mathscr{N}$}
\label{tab:network decomposition of N and N* }       
\begin{tabular}{llccc}
\hline\noalign{\smallskip}
\multicolumn{2}{l}{Network decomposition} & Independent	& Incidence-independent	& $\mathscr{C}^*$-decomposition	\\  
\noalign{\smallskip}\hline\noalign{\smallskip}
\multirow{2}{*}{NRP}
	& $\mathscr{N}^*=\mathscr{N}^*_1+\mathscr{N}^*_2+\mathscr{N}^*_3$ & YES & YES & YES\\ 
	& $\mathscr{N}=\mathscr{N}_1+\mathscr{N}_2+\mathscr{N}_3$ & YES & YES & YES\\ 
\multirow{2}{*}{STR}
	& $\mathscr{N}^*=\mathscr{N}^*_1+\mathscr{N}^*_2+\mathscr{N}^*_3$ & YES & YES & YES\\ 
	& $\mathscr{N}=\mathscr{N}_1+\mathscr{N}_2+\mathscr{N}_3$ & YES & YES & NO\\      
\noalign{\smallskip}\hline
\end{tabular}
\end{center}
\end{table}

 The embedded networks (which corresponds to the gene subsystems) $\mathscr{N}_i^*$ in $\mathscr{N}^*$ form a $\mathscr{C}^*$-decomposition (Table \ref{tab:network decomposition of N and N* }). This means that their only interactions are \enquote{outside} the system as represented by the zero complex. The subnetworks $\mathscr{N}_i$ in $\mathscr{N}$ on the other hand, have reactions where foreign species are involved and it even have complexes in common other than the zero complex. In fact, the non-zero complex \enquote{$Y10$} is common in the STR $\mathscr{N}_2$ and $\mathscr{N}_3$ which makes the STR $\mathscr{N}=\mathscr{N}_1 + \mathscr{N}_2 + \mathscr{N}_3$ not a $\mathscr{C}^*$-decomposition (Table \ref{tab:network decomposition of N and N* }). 
 
\subsection{Modularity of digraph divisions and modularity of CRN decompositions}

Biological systems often display an organization into functional modules and hence, it is important that models of such systems capture characteristics of the modular structure. In this Section, we discuss the modularity concepts available for digraph divisions, but provide evidence that these need to be modified or extended for CRN decompositions.

Modularity was initially introduced by Newman and Girvan for the case of undirected networks while Arenas proposed an extension of this for directed networks \cite{LISC2011}. Their extension is based on the observation that the existence of a directed edge $(i,j)$ between nodes $i$ and $j$, depends on the out-degree and in-degree of nodes $i$ and $j$ respectively. The modularity  for directed network, denoted by $Q$, is expressed as:

$$Q=\frac{1}{m} \sum_{i,j} [ A_{ij}- \frac{k_i^{out} k_j^{in}}{m}] \delta(c_i,c_j) $$
where $m$ is the total number of arcs in the network, $A_{ij}$ is the number of arcs from $i$ to $j$, $k_i^{out}$ and $k_j^{in}$ are the outdegree and indegree of the nodes $i$ and $j$, respectively and $\delta(c_i,c_j)=1$ (i.e., if nodes $i$ and $j$ belong on the same module) and $0$ otherwise.

In terms of the structure or graph of a network, modularity is designed to measure the strength of division of a network into modules (clusters or communities). Good divisions, which have high modularity values, are those with dense edge connections between the vertices within a module but sparse connections between vertices in different modules \cite{LISC2011}. 

A systems biologist describes modules from a graph-theoretical point of view as a group of nodes that are more strongly intraconnected than interconnected while a geneticist might consider a set of co-expressed or co-regulated genes a module \cite{LOJD2011}. 

When we compute the modularities $Q$ of the divisions of the digraphs in Figure \ref{fig:NRPSTR} and the reaction graph of the CRN decompositions of $\mathscr{N}$ (which is also a digraph) for NRP and STR, we obtain the following surprising results:

\begin{table}[h!]
\begin{center}
\caption{Modularity $Q$ of the divisions of the digraph and the reaction graph of $\mathscr{N}$ for NRP and STR}
\label{tab:Q of the digraph and CRN}
\begin{tabular}{ccc}
\hline\noalign{\smallskip}
\textit{Mtb} $S$-system & $Q$ (digraph) & $Q$ (reaction graph)\\
\noalign{\smallskip}\hline\noalign{\smallskip}
NRP & $0.4405$ & $0.3217$ \\
STR & $0.4472$ & $0.1989$ \\
\noalign{\smallskip}\hline
\end{tabular}
\end{center}
\end{table}

Paradoxically, the digraph model indicates that STR is more modular than NRP, while the same measure applied to the reaction graph of the CRN indicates the opposite. Our conclusion from this computation is that because a chemical reaction network has a richer structure than just a digraph, one needs to modify or expand the concept of modularity for a CRN to include aspects of its stoichiometric structure. In the following, we introduce an initial concept which is admittedly specific for $S$-system CRNs but might lead in the right direction.

Now, if two subnetworks (i.e. subdigraphs) are in different connected components, then they are \enquote{physically} isolated, and hence, the question of modularity is trivial. We hence assume in the following that the subdigraphs in question are in the same connected component. 

Any arc in the digraph (biochemical map) whose source and target vertices lie in different subdigraphs lead to the occurrence of common species between the two subnetworks of $\mathscr{N}$ in the $S$-system CRN's. Since by assumption, the digraph under consideration is connected, for each subdigraph, there is at least one such arc and hence at least one common species with another subnetwork. Based on this, we introduce the concept of species coupling level of a subnetwork:

\begin{definition}
The \textbf{species coupling level $c_S(N')$} of a subnetwork $N'$ in $N$ is the ratio of the number of occurrences of common species in the reactant and product complexes of the input and output reactions, respectively and the number of occurrences of non-common species in the reactant complexes of the input reactions.
\end{definition} 


With respect to the digraph of the $S$-system, the numerator counts the number of arcs between vertices from different subdigraphs coming into the subdigraph while the denominator counts the number of arcs between vertices within the subdigraph.

\begin{remark}
The species coupling level of a network is equal to the sum of the species coupling levels of its subnetworks. 
\end{remark}

Table \ref{tab:Q and species coupling level} compares the modularity of the digraphs with the species coupling levels for NRP and STR.

\begin{table}[h!]
\begin{center}
\caption{Modularity $Q$ of the digraphs and the species coupling levels $\sum c_s (\mathscr{N}_i)$ for NRP and STR}
\label{tab:Q and species coupling level}
\begin{tabular}{ccc}
\hline\noalign{\smallskip}
\textit{Mtb} $S$-system & $Q$ (digraph) & $\sum c_s (\mathscr{N}_i)$\\
\noalign{\smallskip}\hline\noalign{\smallskip}
NRP & $0.4405$ & $0.6236$ \\
STR & $0.4472$ & $0.5289$ \\
\noalign{\smallskip}\hline
\end{tabular}
\end{center}
\end{table}

Since modularity is inversely related to species coupling, we see that the values of the latter for NRP and STR are qualitatively consistent with the modularity values for the corresponding digraphs. In our view, this indicates that stoichiometric level information is useful for modularity considerations.

The NRP and STR decompositions contain only mono-species common complexes. We counted the single non-zero one in the species-level coupling calculation and considered the common zero complex as \enquote{outside of the system} studied. We would view the occurrence of common multi-species complexes in other examples as further indicators on the stoichiometric level of lower modularity. We hope to explore these and related concepts in a more general context of developing an appropriate modularity concept for chemical reaction network decompositions.


\section{Conclusions and outlook}
Motivated by Magombedze and Mulder’s \cite{MAMU2013} approach in representing and analyzing the gene regulatory based system model of \textit{Mtb} in modular form, where the system partitions the vertices into subsets that form corresponding subsystems, we studied these subsystems as embedded networks. In the embedded CRN representation of S-system, it is defined as a species subset and a reaction subset induced by the set of digraph vertices of the subsystem.

In this study, we have
\begin{enumerate}

\item corrected Proposition 10 of \cite{AJMM2015} and provided a new proof using decomposition theory. 

\item presented the discordance of $S$-system CRN with at least two dependent species. 

\item developed the concept of $\mathscr{C}$- and $\mathscr{C}^*$-decompositions including their structure theorems in terms of linkage classes and defined a new $S$-system CRN $\mathscr{N}^*$. The union of the embedded networks $\mathscr{N}_i^*$ of $\mathscr{N}$  construct the new $S$-system CRN $\mathscr{N}^*$. Note that these $\mathscr{N}_i^*$s do not partition the reaction set in $\mathscr{N}$ but constitute an independent $\mathscr{C}^*$-decomposition of $\mathscr{N}^*$.

\item shown that the species subsets-induced decomposition of $\mathscr{N} = \sum \mathscr{N}_i$ induces surjective digraph homomorphisms between $\mathscr{N}_i$ and $\mathscr{N}_i^*$.

\item illustrated that the modularity of the decomposition of the $S$-system $\mathscr{N}$ is not consistent with the modularity of the original digraph model of the gene regulatory system. In order for the modularity concept from digraph to capture the stoichiometric structure of CRN, we have introduced the concept of species coupling level for the CRN decompositions. 


\end{enumerate}

Biological systems are complex networks that exhibit the orchestrated interplay of a large array of components \cite{KIM2003}. It is common practice for modellers to decompose the complex system into subsystems. It is believed that studying the dynamics and functionality of the subsystems would facilitate understanding of the system as a whole. Henceforth, discovering and analyzing such subsystems are crucial in gaining better understanding of the complex systems \cite{KIM2003}. There are still a lot to explore and study regarding the topological/network properties with respect to decomposition/separability of subsystems.


\vspace{10pt}

\noindent \textbf{Acknowledgments}. HFF acknowledges the support of the Commission on Higher Education (CHED), Philippines for the CHED-SEGS Scholarship Grant. ARL held research fellowship from De La Salle University and would like to acknowledge the support of De La Salle University’s Research Coordination Office.

\newpage
\noindent \textbf{References}

\newpage
\appendix

\section{Fundamentals of chemical reaction networks and kinetic systems}
\label{app:fundametals of CRNT}
We recall the necessary concepts of chemical reaction networks and the mathematical notation used throughout the paper adopted from the papers \cite{AJMM2015}, \cite{FEIN1987} and \cite{FLRM2017}. 

We begin with the definition of a chemical reaction network.
\begin{definition}
A \textbf{chemical reaction network} is a triple $\mathscr{N}=(\mathscr{S},\mathscr{C},\mathscr{R})$ of three non-empty finite sets:
\begin{enumerate}
\item A set \textbf{species} $\mathscr{S}$,
\item A set $\mathscr{C}$ of \textbf{complexes}, which are non-negative integer linear combinations of the species, and
\item A set $\mathscr{R} \subseteq \mathscr{C} \times \mathscr{C}$ of \textbf{reactions} such that
\begin{itemize}
\item $(i,i) \notin \mathscr{R}$ for all $i \in \mathscr{C}$, and
\item For each $i \in \mathscr{C}$, there exists a $j \in \mathscr{C}$ such that $(i,j) \in \mathscr{R}$ or $(j,i) \in \mathscr{R}.$
\end{itemize}
\end{enumerate}

\end{definition}

We denote with $m$ the number of species, $n$ the number of complexes and $r$ the number of reactions in a CRN.

\begin{definition}
A complex is called \textbf{monospecies} if it consists of only one species, i.e. of the form $kX_i$, $k$ a non-negative integer and $X_i$ a species. It is called \textbf{monomolecular} if $k=1$, and is identified
with the \textbf{zero complex} for $k=0$.
\end{definition}
 A zero complex represents the \enquote{outside} of the system studied, from which chemicals can flow into the system at a constant rate and to which they can flow out at a linear rate (proportional to the concentration of the species). In biological systems, the \enquote{outside} also stands for the degradation of a species. 

A chemical reaction network $(\mathscr{S}, \mathscr{C}, \mathscr{R})$ gives rise to a digraph with complexes as vertices and reactions as arcs. However, the digraph determines the triple uniquely only if an additional property is considered in the definition: $\mathscr{S}=\bigcup$ supp $i$ for $i \in \mathscr{C},$ i.e., each species appears in at least one complex. With this additional property, a CRN can be equivalently defined as follows.

\begin{definition}
A \textbf{chemical reaction network} is a digraph $(\mathscr{C}, \mathscr{R})$ where each vertex has positive degree and stoichiometry, i.e., there is a finite set $\mathscr{S}$ (whose elements are called \textbf{species}) such that $\mathscr{C}$ is a subset of $\mathbb{R}^{\mathscr{S}}_{\geq}.$ Each vertex is called a \textbf{complex} and its coordinates in $\mathbb{R}^{\mathscr{S}}_{\geq}$ are called \textbf{stoichiometric coefficients}. The arcs are called \textbf{reactions}.
\end{definition}

Two useful maps are associated with each reaction:

\begin{definition}
The \textbf{reactant map} $\rho: \mathscr{R} \rightarrow \mathscr{C}$ maps a reaction to its reactant complex while the \textbf{product map} $\pi: \mathscr{R} \rightarrow \mathscr{C}$ maps it to its product complex. We denote $|\rho (\mathscr{R})|$ with $n_r$, i.e., the number of reactant complexes. 
\end{definition}

Connectivity concepts in Digraph Theory apply to CRNs, but have slightly differing names. A connected component is traditionally called a \textbf{linkage class}, denoted by $\mathscr{L}$, in CRNT. A subset of a linkage class where any two elements are connecteed by a directed path in each direction is known as a \textbf{strong linkage class}. If there is no reaction from a complex in the strong linkage class to a complex outside the same strong linkage class, then we have a \textbf{terminal strong linkage class}. We denote the number of linkage classes with $l$, that of the strong linkage classes with $sl$ and that of terminal strong linkage classes with $t$. Clearly, $sl \geq t \geq l.$

Many features of CRNs can be examined by working in terms of finite dimensional spaces $\mathbb{R}^{\mathscr{S}}, \mathbb{R}^{\mathscr{C}}, \mathbb{R}^{\mathscr{R}},$ which are referred to as species space, complex space and reaction space, respectively. We can view a complex $j \in \mathscr{C}$ as a vector in $\mathbb{R}^{\mathscr{C}}$ (called \textit{complex vector}) by writing $j = \sum _{s \in \mathscr{S}} j_s s,$ where $j_s$ is the stoichiometric coefficient of species $s$.

\begin{definition}
The \textbf{reaction vectors} of a CRN $(\mathscr{S}, \mathscr{C}, \mathscr{R})$ are the members of the set $\{j-i \in \mathbb{R}^{\mathscr{S}} | (i,j) \in \mathscr{R}\}.$ The \textbf{stoichiometric subspace} $S$ of the CRN is the linear subspace of $\mathbb{R}^{\mathscr{S}}$ defined by 
	$$S: span \{j-i \in \mathbb{R}^{\mathscr{S}} | (i,j) \in \mathscr{R}\}.$$
The \textbf{rank} of the CRN, $s$, is defined as $s=dim S.$
\end{definition}

\begin{definition}
The \textbf{incidence map} $I_a: \mathbb{R}^{\mathscr{R}} \rightarrow \mathbb{R}^{\mathscr{C}}$ is defined as follows. For $f: \mathscr{R} \rightarrow \mathbb{R}$, then $I_a(f)(v) = - f(a)$ and $f(a)$ if $v = \rho(a)$ and $v = \pi(a)$, respectively, and are $0$ otherwise.
\end{definition}

\noindent Equivalently, it maps the basis vector $\omega_a$ to  $\omega_{v'} -  \omega_v$ if $a: v \rightarrow v'$.\\

It is clearly a linear map, and its matrix representation (with respect to the standard bases $\omega_a$, $\omega_{v}$) is called the \textbf{incidence matrix}, which can be described as 
\begin{center}
\[
 (I_a)_{i,j} = 
  \begin{cases} 
   -1 & \text{if } \rho(a_j) = v_i, \\
   1       & \text{if } \pi(a_j) = v_i,\\
   0		& \text{otherwise}.
  \end{cases}
\]
\end{center}
\noindent Note that in most digraph theory books, the incidence matrix is set as $-I_a$.\\
An important result of digraph theory regarding the incidence matrix is the following:

\begin{proposition}
\label{prop:IncidenceMatrix02}
Let $I$ be the incidence matrix of the directed graph $D = (V, E)$. Then rank $I = n -l$, where $l$ is the number of connected components of $D$.
\end{proposition}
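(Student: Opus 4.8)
The plan is to compute $\operatorname{rank} I$ indirectly, by determining $\dim \ker I^{T}$ and invoking the rank--nullity theorem together with the identity $\operatorname{rank} I = \operatorname{rank} I^{T}$. Since $I$ is the $n \times r$ matrix whose rows are indexed by the complexes (vertices) and whose columns are indexed by the reactions (arcs), the transpose $I^{T}$ is a linear map $\mathbb{R}^{\mathscr{C}} \to \mathbb{R}^{\mathscr{R}}$ with $n$-dimensional domain. Rank--nullity then gives $\operatorname{rank} I = \operatorname{rank} I^{T} = n - \dim \ker I^{T}$, so it suffices to prove that $\dim \ker I^{T} = l$.

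To identify $\ker I^{T}$, I would read off its defining equations one column of $I$ at a time. For a vector $x = (x_v)_{v \in \mathscr{C}}$, the coordinate of $I^{T} x$ associated with a reaction $a : v \to v'$ equals $x_{v'} - x_{v}$, directly from the prescription of the incidence matrix ($-1$ at the reactant, $+1$ at the product). Hence $x \in \ker I^{T}$ if and only if $x_{v} = x_{v'}$ for every arc $a : v \to v'$.

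The substantive step is to upgrade this arc-by-arc condition to constancy on each connected component. Because the relation $x_v = x_{v'}$ is symmetric, it is insensitive to arc orientation, so any two vertices joined by a path in the underlying undirected graph must receive equal values; this is precisely the connectivity notion defining a linkage class. Thus $x$ lies in $\ker I^{T}$ exactly when $x$ is constant on each of the $l$ components, and the indicator vectors of the components form a basis of this space, being linearly independent (pairwise disjoint supports) and spanning. Consequently $\dim \ker I^{T} = l$, and combining with the first paragraph yields $\operatorname{rank} I = n - l$.

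The one point I would flag as genuinely requiring care, rather than formal manipulation, is this passage from the local equalities across single arcs to global constancy on a component: one must use that \emph{connected} here means connected in the underlying undirected digraph (the sense in which linkage classes are defined), so that a connecting path may traverse arcs against their orientation. As an alternative for the lower bound $\operatorname{rank} I \geq n - l$, I could instead exhibit $n - l$ independent columns via a spanning forest of $D$ (which has exactly $n - l$ arcs), peeling off leaves inductively, and pair this with the upper bound $\operatorname{rank} I \leq n - l$ coming from the same $l$ component-indicator vectors in the left null space; but the kernel computation above handles both bounds at once and seems the more economical route.
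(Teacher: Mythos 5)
Your argument is correct and complete. Note that the paper itself offers no proof of this proposition: it is stated in the appendix as a classical fact of digraph theory (immediately after the definition of the incidence matrix), so there is no in-paper argument to compare against. Your route --- reading off from each column that the coordinate of $I^{T}x$ at an arc $a: v \to v'$ is $x_{v'}-x_{v}$, concluding that $\ker I^{T}$ is spanned by the $l$ component indicator vectors, and then applying rank--nullity together with $\operatorname{rank} I = \operatorname{rank} I^{T}$ --- is the standard textbook proof, and you correctly flag the one point that genuinely needs care, namely that ``connected'' must be taken in the underlying undirected graph, which is exactly the sense in which linkage classes are defined in the paper. Two minor remarks: in the CRN setting every vertex has positive degree by definition, but your argument does not rely on this (an isolated vertex would be its own component, contributing one dimension to the left kernel and nothing to the column space, so the formula survives); and your alternative lower-bound argument via a spanning forest is also sound but, as you say, redundant once the kernel computation is in hand.
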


A non-negative integer, called the deficiency, can be associated to each CRN. This number has been the center of many studies in CRNT due to its relevance in the dynamic behavior of the system.

\begin{definition}
The \textbf{deficiency} of a CRN is the integer $\delta = n-l-s.$ 
\end{definition}

\begin{definition}
The \textbf{reactant subspace} $R$ is the linear space in $\mathbb{R}^\mathscr{S}$ generated by the reactant complexes. Its dimension, $dim R$ denoted by $q$, is called the \textbf{reactant rank} of the network. Meanwhile, the \textbf{reactant deficiency} $\delta_p$ is the difference between the number of reactant complexes and the reactant rank, i.e., $\delta_p = n_r -q.$
\end{definition}

We now introduce the fundamentals of chemical kinetic systems. We begin with the general definitions of kinetics from Feliu and Wiuf \cite{FW2012}:

\begin{definition}
A \textbf{kinetics} for a CRN $(\mathscr{S}, \mathscr{C}, \mathscr{R})$ is an assignment of a rate function $K_j: \Omega_K \rightarrow \mathbb{R}_\geq$ to each reaction $r_j \in \mathscr{R}$, where $\Omega_K$ is a set such that $\mathbb{R}^\mathscr{S}_> \subseteq \Omega_K \subseteq \mathbb{R}^\mathscr{S}_\geq$, $c\wedge d \in \Omega_K$ whenever $c,d \in \Omega_K,$ and 
$$K_j(c) \geq 0, \quad \forall c \in \Omega_K.$$
A kinetics for a network $\mathscr{N}$ is denoted by $\displaystyle{K=(K_1,K_2,...,K_r):\Omega_K \to {\mathbb{R}}^{\mathscr{R}}_{\geq}}$. The pair $(\mathscr{N}, K)$ is called the \textbf{chemical kinetic system} (CKS).
\end{definition}

\noindent
In the definition, $c \wedge d$ is the bivector of $c$ and $d$ in the exterior algebra of $\mathbb{R}^\mathscr{S}.$ We add the definition relevant to our context:

\begin{definition}
A chemical kinetics is a kinetics $K$ satisfying the positivity condition: for each reaction $j:y\rightarrow y', K_j(c)>0$ iff $\supp y\subset\supp c$.
\end{definition}

Once a kinetics is associated with a CRN, we can determine the rate at which the concentration of each species evolves at composition $c$.


Power-law kinetics is defined by an $ r x m$ matrix $F =[F_{ij}],$ called the \textbf{kinetic order matrix}, and vector $k \in \mathbb{R}^\mathscr{R}$, called the \textbf{rate vector}. In power-law formalism, the kinetic orders of the species concentrations are real numbers. 

\begin{definition}
A kinetics $K: \mathbb{R}^\mathscr{R}_> \rightarrow \mathbb{R}^\mathscr{R}$ is a \textbf{power-law kinetics} (PLK) if
$$K_i (x) = k_ix^{F_i} \quad \forall i = 1, ... , r$$
with $k_i \in \mathbb{R}_>$ and $ F_{ij} \in \mathbb{R}.$
\end{definition}

\section{Network decomposition}
\label{app:network decomposition}
The following diagram illustrates several decompositions introduced in this paper showing its relationship to Feinberg's decomposition theory.

\begin{figure}[h!]
\begin{center}
\includegraphics[width=1.1\textwidth]{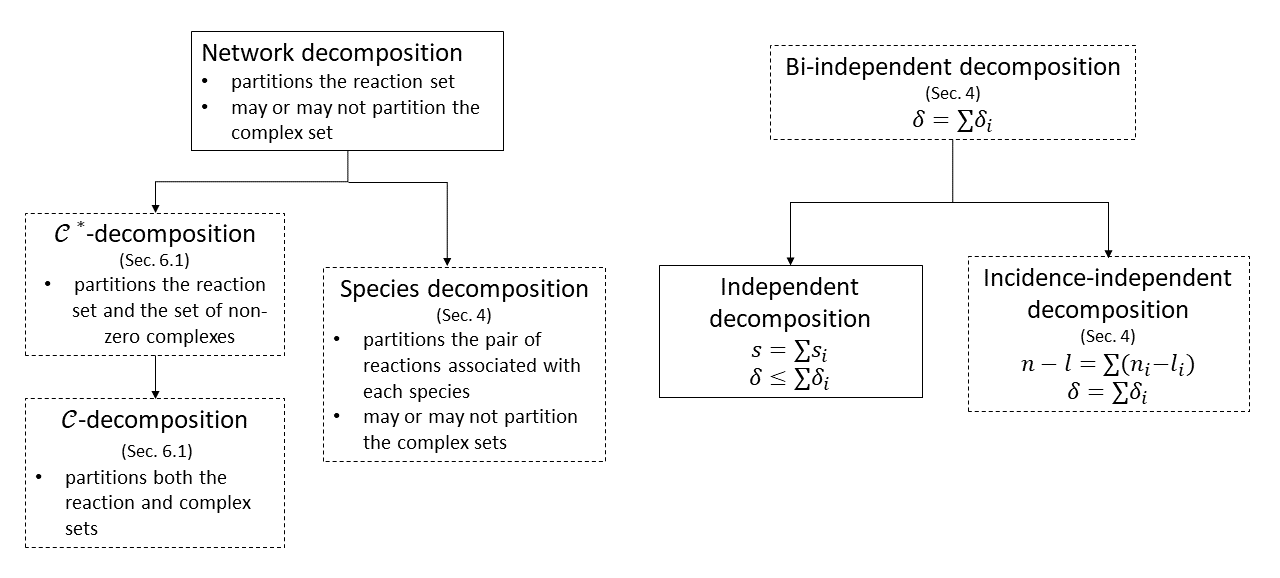} 
\caption{Relationship and properties of Feinberg's decomposition theory (in solid-lined boxes) to the new terms (in dash-lined boxes) that emerged from this paper.}
\label{Network decomposition}
\end{center}
\end{figure}

\end{document}